\documentclass[fleqn]{llncs}
\usepackage{latexsym}
\usepackage{amssymb}
\usepackage{stmaryrd}
\usepackage{graphicx}
\usepackage{color}
\usepackage{url}
\usepackage{phonetic}
\usepackage{amsmath}

\newcommand{\be}{\begin{enumerate}}
\newcommand{\ee}{\end{enumerate}}
\newcommand{\bi}{\begin{itemize}}
\newcommand{\ei}{\end{itemize}}
\newcommand{\bc}{\begin{center}}
\newcommand{\ec}{\end{center}}
\newcommand{\bsp}{\begin{sloppypar}}
\newcommand{\esp}{\end{sloppypar}}

\newtheorem{thm}{Theorem}[subsection]

\newtheorem{lem}[thm]{Lemma}
\newtheorem{prop}[thm]{Proposition}
\newtheorem{rem}[thm]{Remark}

\newcommand{\sglsp}{\ }
\newcommand{\dblsp}{\ \ }

\newcommand{\sB}{\mbox{$\cal B$}}
\newcommand{\sC}{\mbox{$\cal C$}}
\newcommand{\sD}{\mbox{$\cal D$}}
\newcommand{\sE}{\mbox{$\cal E$}}

\newcommand{\sM}{\mbox{$\cal M$}}

\newcommand{\sP}{\mbox{$\cal P$}}

\newcommand{\sT}{\mbox{$\cal T$}}

\newcommand{\sV}{\mbox{$\cal V$}}

\renewcommand{\phi}{\varphi}

\newcommand{\churchqe}{$\mbox{\sc ctt}_{\rm qe}$}
\newcommand{\churchuqe}{$\mbox{\sc ctt}_{\rm uqe}$}

\newcommand{\qzero}{${\cal Q}_0$}
\newcommand{\qzerou}{${\cal Q}^{\rm u}_{0}$}

\newcommand{\set}[1]{{\{ #1 \}}}
\newcommand{\sembrack}[1]{\llbracket#1\rrbracket}
\newcommand{\synbrack}[1]{\ulcorner#1\urcorner}

\newcommand{\mname}[1]{\mbox{\sf #1}}

\newcommand{\mdot}{\mathrel.}
\newcommand{\tarrow}{\rightarrow}
\newcommand{\LambdaApp}{\lambda\,}
\newcommand{\Neg}{\neg}

\renewcommand{\And}{\wedge}
\newcommand{\Implies}{\supset}
\newcommand{\Or}{\vee}
\newcommand{\Iff}{\equiv}

\newcommand{\ForallApp}{\forall\,}

\newcommand{\ForsomeApp}{\exists\,}

\newcommand{\IotaApp}{\mbox{\rm I}\,}
\newcommand{\IsDef}{\downarrow}
\newcommand{\IsUndef}{\uparrow}

\newcommand{\QuasiEqual}{\simeq}
\newcommand{\Undefined}{\bot}
\newcommand{\If}{\mname{if}}
\newcommand{\IsDefApp}{\!\IsDef}
\newcommand{\IsUndefApp}{\!\IsUndef}

\newcommand{\TRUE}{\mbox{{\sc t}}}
\newcommand{\FALSE}{\mbox{{\sc f}}}

\title{Theory Morphisms in Church's Type Theory with Quotation and
  Evaluation\thanks{Published in: H. Geuvers et al., eds,
    \emph{Intelligent Computer Mathematics (CICM 2017)}, \emph{Lecture
      Notes in Computer Science}, Vol.~10383, pp.~147--162, Springer,
    2017. The final publication is available at Springer via
    \texttt{http://dx.doi.org/10.1007/978-3-319-62075-6\_11}.  This
    research was supported by NSERC.}}

\author{William M. Farmer}

\institute{%
Computing and Software, McMaster University, Canada\\
\url{http://imps.mcmaster.ca/wmfarmer}\\[1.5ex]
23 July 2017
}

\pagestyle{headings}

\begin{document}

\maketitle

\begin{abstract}
{\churchqe} is a version of Church's type theory with global quotation
and evaluation operators that is engineered to reason about the
interplay of syntax and semantics and to formalize syntax-based
mathematical algorithms.  {\churchuqe} is a variant of {\churchqe}
that admits undefined expressions, partial functions, and multiple
base types of individuals.  It is better suited than {\churchqe} as a
logic for building networks of theories connected by theory morphisms.
This paper presents the syntax and semantics of {\churchuqe}, defines
a notion of a theory morphism from one {\churchuqe} theory to another,
and gives two simple examples involving monoids that illustrate the
use of theory morphisms in {\churchuqe}.
\end{abstract}

\iffalse 

\textbf{Keywords:} Theory morphisms, Church's type theory, quotation
and evaluation, symbolic computation, reasoning about syntax, little
theories method, theory graphs, biform theories, undefinedness.

\fi

\section{Introduction}\label{sec:introduction}

A \emph{syntax-based mathematical algorithm (SBMA)}, such as a
symbolic differentiation algorithm, manipulates mathematical
expressions in a mathematically meaningful way.  Reasoning about SBMAs
requires reasoning about the relationship between how the expressions
are manipulated and what the manipulations mean mathematically.  We
argue in~\cite{Farmer13} that a logic with quotation and evaluation would
provide a global infrastructure for formalizing SBMAs and reasoning
about the interplay of syntax and semantics that is embodied in them.

\emph{Quotation} is a mechanism for referring to a syntactic value
(e.g., a syntax tree) that represents the syntactic structure of an
expression, while \emph{evaluation} is a mechanism for referring to
the value of the expression that a syntactic value represents.
Incorporating quotation and evaluation into a traditional logic like
first-order logic or simple type theory is tricky; there are several
challenging problems that the logic engineer must
overcome~\cite{Farmer13,FarmerArxiv16}.
{\churchqe}~\cite{FarmerArxiv16,Farmer16} is a version of Church's
type theory with global quotation and evaluation operators inspired by
the quote and eval operators in the Lisp programming language.  We
show in~\cite{FarmerArxiv16} that formula schemas and meaning formulas
for SBMAs can be expressed in {\churchqe} using quotation and
evaluation and that such schemas and meaning formulas can be
instantiated and proved within the proof system for {\churchqe}.

The \emph{little theories method}~\cite{FarmerEtAl92b} is an approach
for understanding and organizing mathematical knowledge as a
\emph{theory graph}~\cite{Kohlhase14} consisting of axiomatic
\emph{theories} as nodes and \emph{theory morphisms}\footnote{Theory
  morphisms are also known as \emph{immersions}, \emph{realizations},
  \emph{theory interpretations}, \emph{translations}, and
  \emph{views}.} as directed edges.  A theory consists of a
\emph{language} of expressions that denote mathematical values and a
set of \emph{axioms} that express in the language assumptions about
the values.  A theory morphism is a meaning-preserving mapping from
the formulas of one theory to the formulas of another theory.  Theory
morphisms serve as information conduits that enable definitions and
theorems to be passed from an abstract theory to many other more
concrete theories~\cite{BarwiseSeligman97}.

A \emph{biform theory}~\cite{CaretteFarmer08,Farmer07b} is a
combination of an axiomatic theory and an algorithmic theory (a
collection of algorithms that perform symbolic computations).  It
consists of a \emph{language} $L$ generated from a set of
\emph{symbols}, a set of \emph{transformers}, and a set of
\emph{axioms}.  The expressions of $L$ denote mathematical values that
include syntactic values representing the expressions of $L$.  The
transformers are SBMAs and other algorithms that implement functions
on the expressions of $L$ and are represented by symbols of $L$.  The
axioms are formulas of $L$ that express properties about the symbols
and transformers of the biform theory.  Unlike traditional logics,
{\churchqe} is well suited for formalizing biform theories.  Can the
little theories method be applied to biform theories formalized in
{\churchqe}?  This would require a definition of a theory morphism for
{\churchqe} theories.

Defining a notion of a theory morphism in a logic with quotation is
not as straightforward as in a logic without quotation due to the
following problem:

\bi

  \item[] \emph{Constant Interpretation Problem}.  Let $T_1$ and $T_2$
    be theories in a logic with a quotation operator
    $\synbrack{\cdot}$.  If a theory morphism $\Phi$ from $T_1$ to
    $T_2$ interprets two distinct constants $c$ and $c'$ in $T_1$ by a
    single constant $d$ in $T_2$, then $\Phi$ would map the true
    formula $\synbrack{c} \not= \synbrack{c'}$ of $T_1$ to the false
    formula $\synbrack{d} \not= \synbrack{d}$ of $T_2$, and hence
    $\Phi$ would not be meaning preserving.  Similarly, if $\Phi$
    interprets $c$ as an expression $e$ in $T_2$ that is not a
    constant, then $\Phi$ would map a true formula like
    $\mname{is-constant}(\synbrack{c})$ to the false formula
    $\mname{is-constant}(\synbrack{e})$.

\ei

\noindent
This paper defines a notion of a theory morphism that overcomes this
problem in {\churchuqe}, a variant of {\churchqe} that admits
undefined expressions, partial functions, and multiple base types of
individuals.  {\churchuqe} merges the machinery for quotation and
evaluation found in {\churchqe}~\cite{FarmerArxiv16} with the
machinery for undefinedness found in {\qzerou}~\cite{Farmer08a}.  Like
{\churchqe} and {\qzerou}, {\churchuqe} is based on
{\qzero}~\cite{Andrews02}, Peter Andrews' elegant version of Church's
type theory.  See~\cite{FarmerArxiv16} for references related to
{\churchuqe}.

{\churchuqe} is better suited than {\churchqe} as a logic for the
little theories method for two reasons.  First, it is often convenient
for a theory morphism from $T_1$ to $T_2$ to interpret different kinds
of values by values of different types.  Since {\churchqe} contains
only one base type of individuals, $\iota$, all individuals in a
theory $T_1$ must be interpreted by values of the same type in $T_2$.
Allowing multiple base types of individuals in {\churchuqe} eliminates
this restriction. Second, it is often useful to interpret a type
$\alpha$ in $T_1$ by a subset of the denotation of a type $\beta$ in
$T_2$.  As shown in~\cite{Farmer93}, this naturally leads to partial
functions on the type $\beta$.  {\churchuqe} has built-in support for
partial functions and undefinedness based on the traditional approach
to undefinedness~\cite{Farmer04}; {\churchqe} has no such built-in
support.\footnote{A logic without support for partial functions and
  undefinedness --- such as {\churchqe} or the logic of
  HOL~\cite{GordonMelham93} --- can interpret $\alpha$ by a type
  $\beta'$ that is isomorphic to a subset of $\beta$.  However, this
  approach is more complicated and farther from standard mathematics
  practice than interpreting $\alpha$ directly by a subset of
  $\beta$.}

The rest of the paper is organized as follows.  The syntax and
semantics of {\churchuqe} are presented in sections~\ref{sec:syntax}
and~\ref{sec:semantics}.  The notion of a theory morphism in
{\churchuqe} is defined in section~\ref{sec:morphisms}.
Section~\ref{sec:examples} contains two simple examples of theory
morphisms in {\churchuqe} involving monoids.  The paper concludes in
section~\ref{sec:conclusion} with a summary of the paper's results and
some brief remarks about constructing theory morphisms in an
implementation of {\churchuqe} and about future work.

The syntax and semantics of {\churchuqe} are presented as briefly as
possible.  The reader should consult~\cite{Farmer08a}
and~\cite{FarmerArxiv16} for a more in-depth discussion on the ideas
underlying the syntax and semantics in {\churchuqe}.  Due to limited
space, a proof system is not given in this paper for {\churchuqe}.  A
proof system for {\churchuqe} can be straightforwardly derived by
merging the proof systems for {\churchqe}~\cite{FarmerArxiv16} and
{\qzerou}~\cite{Farmer08a}.

\section{Syntax}\label{sec:syntax}

The syntax of {\churchuqe} is the same as the syntax of
{\churchqe}~\cite{FarmerArxiv16} except that (1)~the types include
denumerably many base types of individuals instead of just the single
$\iota$ type, (2)~the expressions include conditional expressions, and
(3)~the logical constants include constants for definite description
and exclude $\mname{is-expr}_{\epsilon \tarrow o}$ --- which we will
see is not needed since all constructions are ``proper'' in
{\churchuqe}.

\subsection{Types}

Let $\sB$ be a denumerable set of symbols that contains $\omicron$ and
$\epsilon$ .  A \emph{type} of {\churchuqe} is a string of symbols
defined inductively by the following formation rules:
\be

  \item \emph{Base type}: If $\alpha \in \sB$, then $\alpha$ is a
    type.

  \item \emph{Function type}: If $\alpha$ and $\beta$ are types, then
    $(\alpha \tarrow \beta)$ is a type.

\ee
Let $\sT$ denote the set of types of {\churchuqe}.  $\omicron$ and
$\epsilon$ are the \emph{logical base types} of {\churchuqe}.
$\alpha,\beta,\gamma, \ldots$ are syntactic variables ranging over
types.  When there is no loss of meaning, matching pairs of
parentheses in types may be omitted.  We assume that function type
formation associates to the right so that a type of the form $(\alpha
\tarrow (\beta \tarrow \gamma))$ may be written as $\alpha \tarrow
\beta \tarrow \gamma$.

\subsection{Expressions}\label{subsec:expressions}

A \emph{typed symbol} is a symbol with a subscript from $\sT$.  Let
$\sV$ be a set of typed symbols such that $\sV$ contains denumerably
many typed symbols with subscript~$\alpha$ for each $\alpha \in \sT$.
A \emph{variable of type $\alpha$} of {\churchuqe} is a member of
$\sV$ with subscript~$\alpha$.  $\textbf{f}_\alpha, \textbf{g}_\alpha,
\textbf{h}_\alpha, \textbf{u}_\alpha, \textbf{v}_\alpha,
\textbf{w}_\alpha,\textbf{x}_\alpha, \textbf{y}_\alpha,
\textbf{z}_\alpha,\ldots$ are syntactic variables ranging over
variables of type~$\alpha$.  We will assume that $f_\alpha, g_\alpha,
h_\alpha, u_\alpha, v_\alpha, w_\alpha, x_\alpha, y_\alpha,
z_\alpha,\ldots$ are actual variables of type~$\alpha$ of
{\churchuqe}.

Let $\sC$ be a set of typed symbols disjoint from $\sV$ that includes
the typed symbols in Table~\ref{tab:log-con}.  A \emph{constant of
  type~$\alpha$} of {\churchuqe} is a member of $\sC$ with
subscript~$\alpha$.  The typed symbols in Table~\ref{tab:log-con} are
the \emph{logical constants} of {\churchuqe}.  $\textbf{c}_\alpha,
\textbf{d}_\alpha, \ldots$ are syntactic variables ranging over
constants of type~$\alpha$.

\begin{table}[t]
\bc
\begin{tabular}{|ll|}
\hline
$\mname{=}_{\alpha \tarrow \alpha \tarrow o}$ 
& {\dblsp}for all $\alpha \in \sT$\\
$\iota_{(\alpha \tarrow o) \tarrow \alpha}$ 
& {\dblsp}for all $\alpha \in \sT$ with $\alpha \not= o$\\
$\mname{is-var}_{\epsilon \tarrow o}$
&\\
$\mname{is-var}_{\epsilon \tarrow o}^{\alpha}$
& {\dblsp}for all $\alpha \in \sT$\\
$\mname{is-con}_{\epsilon \tarrow o}$
&\\
$\mname{is-con}_{\epsilon \tarrow o}^{\alpha}$
& {\dblsp}for all $\alpha \in \sT$\\
$\mname{app}_{\epsilon \tarrow \epsilon \tarrow \epsilon}$
&\\
$\mname{abs}_{\epsilon \tarrow \epsilon \tarrow \epsilon}$
&\\
$\mname{cond}_{\epsilon \tarrow \epsilon \tarrow \epsilon \tarrow \epsilon}$
&\\
$\mname{quo}_{\epsilon \tarrow \epsilon}$
&\\

\iffalse
$\mname{is-expr}_{\epsilon \tarrow o}$
&\\
\fi

$\mname{is-expr}_{\epsilon \tarrow o}^{\alpha}$
& {\dblsp}for all $\alpha \in \sT$\\
$\sqsubset_{\epsilon \tarrow \epsilon \tarrow o}$
&\\
$\mname{is-free-in}_{\epsilon \tarrow \epsilon \tarrow o}$
&\\
\hline
\end{tabular}
\ec
\caption{Logical Constants}\label{tab:log-con}
\end{table}

An \emph{expression of type $\alpha$} of {\churchuqe} is a string of
symbols defined inductively by the formation rules below.
$\textbf{A}_\alpha, \textbf{B}_\alpha, \textbf{C}_\alpha, \ldots$ are
syntactic variables ranging over expressions of type $\alpha$.  An
expression is \emph{eval-free} if it is constructed using just the
first six formation rules.
\be

  \item \emph{Variable}: $\textbf{x}_\alpha$ is an expression of type
    $\alpha$.

  \item \emph{Constant}: $\textbf{c}_\alpha$ is an expression of type
    $\alpha$.

  \item \emph{Function application}: $(\textbf{F}_{\alpha \tarrow
    \beta} \, \textbf{A}_\alpha)$ is an expression of type $\beta$.

  \item \emph{Function abstraction}: $(\LambdaApp \textbf{x}_\alpha
    \mdot \textbf{B}_\beta)$ is an expression of type $\alpha \tarrow
    \beta$.

  \item \emph{Conditional}: $(\If \; \textbf{A}_o \; \textbf{B}_\alpha
    \; \textbf{C}_\alpha)$ is an expression of type $\alpha$.

  \item \emph{Quotation}: $\synbrack{\textbf{A}_\alpha}$ is an
    expression of type $\epsilon$ if $\textbf{A}_\alpha$ is eval-free.

  \item \emph{Evaluation}: $\sembrack{\textbf{A}_\epsilon}_{{\bf
      B}_\beta}$ is an expression of type $\beta$.

\ee 

\noindent
The purpose of the second argument $\textbf{B}_\beta$ in an evaluation
$\sembrack{\textbf{A}_\epsilon}_{{\bf B}_\beta}$ is to establish the
type of the evaluation.\footnote{It would be more natural for the
  second argument of an evaluation to be a type, but that would lead
  to an infinite family of evaluation operators, one for every type,
  since type variables are not available in {\churchuqe} (as well as
  in {\churchqe} and {\qzero}).}  A \emph{formula} is an expression
of type $o$.  A \emph{predicate} is an expression of a type of the
form $\alpha \tarrow o$.  When there is no loss of meaning, matching
pairs of parentheses in expressions may be omitted.  We assume that
function application formation associates to the left so that an
expression of the form $((\textbf{G}_{\alpha \tarrow \beta \tarrow
  \gamma} \, \textbf{A}_\alpha) \, \textbf{B}_\beta)$ may be written
as $\textbf{G}_{\alpha \tarrow \beta \tarrow \gamma} \,
\textbf{A}_\alpha \, \textbf{B}_\beta$.

\begin{rem}[Conditionals]\label{note:cond-wff}\em\bsp
We will see in the next section that $(\If \; \textbf{A}_o \;
\textbf{B}_\alpha \; \textbf{C}_\alpha)$ is a conditional expression
that is not strict with respect to undefinedness.  For instance, if
$\textbf{A}_o$ is true, then $(\If \; \textbf{A}_o \;
\textbf{B}_\alpha \;\textbf{C}_\alpha)$ denotes the value of
$\textbf{B}_\alpha$ even when $\textbf{C}_\alpha$ is undefined.  We
construct conditionals using an expression constructor instead of a
constant since constants always denote functions that are effectively
strict with respect to undefinedness.  We will use conditional
expressions to restrict the domain of a function. \esp
\end{rem}

An occurrence of a variable $\textbf{x}_\alpha$ in an eval-free
expression $\textbf{B}_\beta$ is \emph{bound} [\emph{free}] if (1) it
is not in a quotation and (2) it is [not] in a subexpression of
$\textbf{B}_\beta$ of the form $\LambdaApp \textbf{x}_\alpha \mdot
\textbf{C}_\gamma$.  An eval-free expression is \emph{closed} if no
free variables occur in it.

\iffalse

$\textbf{A}_\alpha$ is \emph{free for $\textbf{x}_\alpha$ in}
$\textbf{B}_\beta$ if no free occurrence of $\textbf{x}_\alpha$ in
$\textbf{B}_\beta$ is within a subexpression of $\textbf{B}_\beta$ of
the form $\LambdaApp \textbf{y}_\gamma \mdot \textbf{C}_\delta$ such
that $\textbf{y}_\gamma$ is free in $\textbf{A}_\alpha$.

\fi

\subsection{Constructions}

Let $\sE$ be the function mapping eval-free expressions to
expressions of type $\epsilon$ that is defined inductively as follows:

\be

  \item $\sE(\textbf{x}_\alpha) = \synbrack{\textbf{x}_\alpha}$.

  \item $\sE(\textbf{c}_\alpha) = \synbrack{\textbf{c}_\alpha}$.

  \item $\sE(\textbf{F}_{\alpha \tarrow \beta} \, \textbf{A}_\alpha) =
    \mname{app}_{\epsilon \tarrow \epsilon \tarrow \epsilon} \,
    \sE(\textbf{F}_{\alpha \tarrow \beta}) \, \sE(\textbf{A}_\alpha)$.

  \item $\sE(\LambdaApp \textbf{x}_\alpha \mdot \textbf{B}_\beta) =
    \mname{abs}_{\epsilon \tarrow \epsilon \tarrow \epsilon} \,
    \sE(\textbf{x}_\alpha) \, \sE(\textbf{B}_\beta)$.

  \item $\sE(\If \; \textbf{A}_o \; \textbf{B}_\alpha \;
    \textbf{C}_\alpha) = \mname{cond}_{\epsilon \tarrow \epsilon
      \tarrow \epsilon \tarrow \epsilon} \, \sE(\textbf{A}_o) \,
    \sE(\textbf{B}_\alpha) \, \sE(\textbf{C}_\alpha).$

  \item $\sE(\synbrack{\textbf{A}_\alpha}) = \mname{quo}_{\epsilon
    \tarrow \epsilon} \, \sE(\textbf{A}_\alpha)$.

\ee

\noindent
A \emph{construction} of {\churchuqe} is an expression in the range of
$\sE$.  $\sE$ is clearly injective.  When $\textbf{A}_\alpha$ is
eval-free, $\sE(\textbf{A}_\alpha)$ is a construction that represents
the syntactic structure of $\textbf{A}_\alpha$.  That is,
$\sE(\textbf{A}_\alpha)$ is a syntactic value that represents how
$\textbf{A}_\alpha$ is constructed as an expression.  In contrast to
{\churchqe}, the constructions of {\churchuqe} do not include
``improper constructions'' --- such as $\mname{app}_{\epsilon \tarrow
  \epsilon \tarrow \epsilon} \, \synbrack{\textbf{x}_\alpha} \,
\synbrack{\textbf{x}_\alpha}$ --- that do not represent the syntactic
structures of eval-free expressions.

The six kinds of eval-free expressions and the syntactic values that
represent their syntactic structures are given in
Table~\ref{tab:eval-free-exprs}.  

\begin{table}[p]
\bc
\begin{tabular}{|lll|}
\hline

\textbf{Kind}
& \textbf{Syntax}
& \textbf{Syntactic Value}\\

Variable \hspace*{15ex}
& $\textbf{x}_\alpha$  \hspace*{15ex}
& $\synbrack{\textbf{x}_\alpha}$\\

Constant
& $\textbf{c}_\alpha$
& $\synbrack{\textbf{c}_\alpha}$\\

Function application
& $\textbf{F}_{\alpha \tarrow \beta} \, \textbf{A}_\alpha$
& $\mname{app}_{\epsilon \tarrow \epsilon \tarrow \epsilon} \,
  \sE(\textbf{F}_{\alpha \tarrow \beta}) \, \sE(\textbf{A}_\alpha)$\\

Function abstraction
& $\LambdaApp \textbf{x}_\alpha \mdot \textbf{B}_\beta$
& $\mname{abs}_{\epsilon \tarrow \epsilon \tarrow \epsilon} \,
  \sE(\textbf{x}_\alpha) \, \sE(\textbf{B}_\beta)$\\

Conditional
& $(\If \; \textbf{A}_o \; \textbf{B}_\alpha \; \textbf{C}_\alpha)$
& $\mname{cond}_{\epsilon \tarrow \epsilon \tarrow \epsilon \tarrow \epsilon} 
  \, \sE(\textbf{A}_o) \, \sE(\textbf{B}_\alpha) \, \sE(\textbf{C}_\alpha).$\\

Quotation
& $\synbrack{\textbf{A}_\alpha}$
& $\mname{quo}_{\epsilon \tarrow \epsilon} \, \sE(\textbf{A}_\alpha)$\\

\hline
\end{tabular}
\ec
\caption{Six Kinds of Eval-Free Expressions}\label{tab:eval-free-exprs}
\end{table}

\subsection{Theories}

Let $\sB' \subseteq \sB$ and $\sC' \subseteq \sC$.  A type $\alpha$ of
{\churchuqe} is a \emph{$\sB'$-type} if each base type occurring in
$\alpha$ is a member of $\sB'$.  An expression $\textbf{A}_\alpha$ of
{\churchuqe} is a \emph{$(\sB',\sC')$-expression} if each base type
and constant occurring in $\textbf{A}_\alpha$ is a member of $\sB'$
and $\sC'$, respectively.  A \emph{language} of {\churchuqe} is the
set of all $(\sB',\sC')$-expressions for some $\sB' \subseteq \sB$ and
$\sC' \subseteq \sC$ such that $\sB'$ contains the logical base types
of {\churchuqe} (i.e., $o$ and $\iota$) and $\sC'$ contains the
logical constants of {\churchuqe}.  A \emph{theory} of {\churchuqe} is
a pair $T=(L,\Gamma)$ where $L$ is a language of {\churchuqe} and
$\Gamma$ is a set of formulas in $L$ (called the \emph{axioms} of
$T$).  $\textbf{A}_\alpha$ is an \emph{expression of a theory $T$} if
$\textbf{A}_\alpha \in L$.

\subsection{Definitions and Abbreviations} \label{subsec:definitions}

As in~\cite{FarmerArxiv16}, we introduce in Table~\ref{tab:defs}
several defined logical constants and abbreviations.
$({\textbf{A}_\alpha\IsDefApp})$ says that $\textbf{A}_\alpha$ is
defined, and similarly, $({\textbf{A}_\alpha\IsUndefApp})$ says that
$\textbf{A}_\alpha$ is undefined.  $\textbf{A}_\alpha \QuasiEqual
\textbf{B}_\alpha$ says that $\textbf{A}_\alpha$ and
$\textbf{B}_\alpha$ are \emph{quasi-equal}, i.e., that
$\textbf{A}_\alpha$ and $\textbf{B}_\alpha$ are either both defined
and equal or both undefined.  $\IotaApp \textbf{x}_\alpha \mdot
\textbf{A}_o$ is a \emph{definite description}.  It denotes the unique
$\textbf{x}_\alpha$ that satisfies $\textbf{A}_o$.  If there is no or
more than one such $\textbf{x}_\alpha$, it is undefined.  The defined
constant $\Undefined_\alpha$ is a canonical undefined expression of
type $\alpha$.

\iffalse 
The former includes constants for true and false, the propositional
connectives, and canonical undefined expressions.  The latter includes
notation for equality, the propositional connectives, universal and
existential quantification, $\sqsubset_{\epsilon \tarrow \epsilon
  \tarrow o}$ as an infix operator, a simplified notation for
evaluations, defined and undefined expressions, quasi-equality,
definite description, and conditional expressions.
\fi

\begin{table}[p]
\bc
\begin{tabular}{|lll|}
\hline

$(\textbf{A}_\alpha = \textbf{B}_\alpha)$
& {\dblsp}stands for{\dblsp}
& $=_{\alpha \tarrow \alpha \tarrow o} \, \textbf{A}_\alpha \, \textbf{B}_\alpha$.\\

$(\textbf{A}_o \Iff \textbf{B}_o)$ 
& {\dblsp}stands for{\dblsp} 
& $=_{o \tarrow o \tarrow o} \, \textbf{A}_o \, \textbf{B}_o$.\\

$T_o$ 
& {\dblsp}stands for{\dblsp}
& $=_{o \tarrow o \tarrow o} \; = \; =_{o \tarrow o \tarrow o}$.\\

$F_o$ 
& {\dblsp}stands for{\dblsp}
& $(\LambdaApp x_o \mdot T_o) = (\LambdaApp x_o \mdot x_o).$\\

$(\ForallApp \textbf{x}_\alpha \mdot \textbf{A}_o)$ 
& {\dblsp}stands for{\dblsp}
& $(\LambdaApp \textbf{x}_\alpha \mdot T_o) = (\LambdaApp \textbf{x}_\alpha \mdot \textbf{A}_o)$.\\

$\wedge_{o \tarrow o \tarrow o}$ 
& {\dblsp}stands for{\dblsp}
& $\LambdaApp x_o \mdot \LambdaApp y_o \mdot {}$\\
& 
& \hspace*{2ex}$((\LambdaApp g_{o \tarrow o \tarrow o} \mdot 
g_{o \tarrow o \tarrow o} \, T_o \, T_o) = {}$\\
&
& \hspace*{3ex}$(\LambdaApp g_{o \tarrow o \tarrow o} \mdot 
g_{o \tarrow o \tarrow o} \, x_o \, y_o)).$\\

$(\textbf{A}_o \And \textbf{B}_o)$ 
& {\dblsp}stands for{\dblsp}
& $\wedge_{o \tarrow o \tarrow o} \, \textbf{A}_o \, \textbf{B}_o$.\\

$\Implies_{o \tarrow o \tarrow o}$ 
& {\dblsp}stands for{\dblsp}
& $\LambdaApp x_o \mdot \LambdaApp y_o \mdot (x_o = (x_o \And y_o)).$\\ 

$(\textbf{A}_o \Implies \textbf{B}_o)$ 
& {\dblsp}stands for{\dblsp}
& ${\Implies_{o \tarrow o \tarrow o}} \, \textbf{A}_o \,\textbf{B}_o$.\\

$\Neg_{o \tarrow o}$ 
& {\dblsp}stands for{\dblsp}
& ${=_{o \tarrow o \tarrow o}} \, F_o$.\\

$(\Neg\textbf{A}_o)$ 
& {\dblsp}stands for{\dblsp}
& $\Neg_{o \tarrow o} \, \textbf{A}_o$.\\

$\vee_{o \tarrow o \tarrow o}$ 
& {\dblsp}stands for{\dblsp}
& $\LambdaApp x_o \mdot \LambdaApp y_o \mdot \Neg (\Neg x_o \And \Neg y_o).$\\

$(\textbf{A}_o \Or \textbf{B}_o)$ 
& {\dblsp}stands for{\dblsp}
& ${\vee_{o \tarrow o \tarrow o}} \, \textbf{A}_o \, \textbf{B}_o$.\\

$(\ForsomeApp \textbf{x}_\alpha \mdot \textbf{A}_o)$ 
& {\dblsp}stands for{\dblsp}
& $\Neg(\ForallApp \textbf{x}_\alpha \mdot \Neg\textbf{A}_o)$.\\

$(\textbf{A}_\alpha \not= \textbf{B}_\alpha)$ 
& {\dblsp}stands for{\dblsp} 
& $\Neg(\textbf{A}_\alpha = \textbf{B}_\alpha)$.\\

$\textbf{A}_\epsilon \sqsubset_{\epsilon \tarrow \epsilon \tarrow o} \textbf{B}_\epsilon$
& {\dblsp}stands for{\dblsp}
& $\sqsubset_{\epsilon \tarrow \epsilon \tarrow o} \, 
\textbf{A}_\epsilon \, \textbf{B}_\epsilon$.\\

$\sembrack{\textbf{A}_\epsilon}_\beta$ 
& {\dblsp}stands for{\dblsp}
& $\sembrack{\textbf{A}_\epsilon}_{{\bf B}_\beta}$.\\

$(\textbf{A}_\alpha\IsDefApp)$
& {\dblsp}stands for{\dblsp}
& $\textbf{A}_\alpha = \textbf{A}_\alpha$.\\

$(\textbf{A}_\alpha\IsUndefApp)$
& {\dblsp}stands for{\dblsp}
& $\Neg(\textbf{A}_\alpha\IsDefApp)$.\\

$(\textbf{A}_\alpha \QuasiEqual \textbf{B}_\alpha)$
& {\dblsp}stands for{\dblsp}
& $({\textbf{A}_\alpha\IsDefApp} \Or {\textbf{B}_\alpha\IsDefApp})
\Implies \textbf{A}_\alpha = \textbf{B}_\alpha$.\\

$(\IotaApp \textbf{x}_\alpha \mdot \textbf{A}_o)$
& {\dblsp}stands for{\dblsp}
& $\iota_{(\alpha \tarrow o) \tarrow \alpha} \,
  (\LambdaApp \textbf{x}_\alpha \mdot \textbf{A}_o)$
  {\sglsp} where $\alpha \not= o$.\\

$\Undefined_o$ 
& {\dblsp}stands for{\dblsp}
& $F_o$.\\

$\Undefined_\alpha$ 
& {\dblsp}stands for{\dblsp}
& $\IotaApp x_\alpha \mdot x_\alpha \not= x_\alpha$ {\sglsp} where $\alpha \not= o$.\\

\iffalse
$(\If \, \textbf{A}_o \, \textbf{B}_\alpha \, \textbf{C}_\alpha)$
& {\dblsp}stands for{\dblsp}
& $\IotaApp x_\alpha \mdot 
  ((\textbf{A}_o \Implies x_\alpha = \textbf{B}_\alpha) \And
  (\Neg\textbf{A}_o \Implies x_\alpha = \textbf{C}_\alpha))$\\
& & where $\textbf{A}_o$, $\textbf{B}_\alpha$, and $\textbf{C}_\alpha$ are eval-free and\\
& & $x_\alpha$ does
  not occur in $\textbf{A}_o$, $\textbf{B}_\alpha$, or $\textbf{C}_\alpha$.\\
\fi

\hline
\end{tabular}
\ec
\caption{Definitions and Abbreviations}\label{tab:defs}
\end{table}

\section{Semantics}\label{sec:semantics}

The semantics of {\churchuqe} is the same as the semantics of
{\churchqe} except that the former admits undefined expressions in
accordance with the traditional approach to
undefinedness~\cite{Farmer04}.  Two principal changes are made to the
{\churchqe} semantics: (1)~The notion of a general model is redefined
to include partial functions as well as total functions.  (2)~The
valuation function for expressions is made into a partial function
that assigns a value to an expression iff the expression is defined
according to the traditional approach.

\subsection{Frames}

A \emph{frame} of {\churchuqe} is a collection $\set{D_\alpha \;|\;
  \alpha \in \sT}$ of domains such that:

\be

  \item $D_o = \set{\TRUE,\FALSE}$, the set of standard \emph{truth
    values}.

  \item $D_\epsilon$ is the set of \emph{constructions} of
    {\churchuqe}.

  \item For $\alpha \in \sB$ with $\alpha \not\in \set{o,\epsilon}$,
    $D_\alpha$ is a nonempty set of values (called
    \emph{individuals}).

  \item For $\alpha, \beta \in \sT$, $D_{\alpha \tarrow \beta}$ is
    some set of \emph{total} functions from $D_\alpha$ to $D_\beta$ if
    $\beta = o$ and some set of \emph{partial and total} functions
    from $D_\alpha$ to $D_\beta$ if $\beta \not= o$.

\ee

\subsection{Interpretations}

An \emph{interpretation} of {\churchuqe} is a pair $(\set{D_\alpha
  \;|\; \alpha \in \sT},I)$ consisting of a frame and an
interpretation function $I$ that maps each constant in $\sC$ of type
$\alpha$ to an element of $D_\alpha$ such that:

\be

  \item For all $\alpha \in \sT$, $I(=_{\alpha \tarrow \alpha \tarrow
    o})$ is the total function $f \in D_{\alpha \tarrow \alpha \tarrow
    o}$ such that, for all $d_1,d_2 \in D_\alpha$, $f(d_1)(d_2) =
    \TRUE$ iff $d_1 = d_2$.

  \item For all $\alpha \in \sT$ with $\alpha \not= o$,
    $I(\iota_{(\alpha \tarrow o) \tarrow \alpha})$ is the partial
    function $f \in D_{(\alpha \tarrow o) \tarrow \alpha}$ such that,
    for all $d \in D_{\alpha \tarrow o}$, if the predicate $d$
    represents a singleton $\set{d'} \subseteq D_\alpha$, then $f(d) =
    d'$, and otherwise $f(d)$ is undefined.

  \item $I(\mname{is-var}_{\epsilon \tarrow o})$ the total function $f
    \in D_{\epsilon \tarrow o}$ such that, for all constructions
    $\textbf{A}_\epsilon \in D_\epsilon$, $f(\textbf{A}_\epsilon) =
    \TRUE$ iff $\textbf{A}_\epsilon = \synbrack{\textbf{x}_\alpha}$
    for some variable $\textbf{x}_\alpha \in \sV$ (where $\alpha$ can
    be any type).

  \item For all $\alpha \in \sT$, $I(\mname{is-var}_{\epsilon \tarrow
    o}^{\alpha})$ is the total function $f \in D_{\epsilon \tarrow o}$
    such that, for all constructions $\textbf{A}_\epsilon \in
    D_\epsilon$, $f(\textbf{A}_\epsilon) = \TRUE$ iff
    $\textbf{A}_\epsilon = \synbrack{\textbf{x}_\alpha}$ for some
    variable $\textbf{x}_\alpha \in \sV$.

  \item $I(\mname{is-con}_{\epsilon \tarrow o})$ is the total function
    $f \in D_{\epsilon \tarrow o}$ such that, for all constructions
    $\textbf{A}_\epsilon \in D_\epsilon$, $f(\textbf{A}_\epsilon) =
    \TRUE$ iff $\textbf{A}_\epsilon = \synbrack{\textbf{c}_\alpha}$
    for some constant $\textbf{c}_\alpha \in \sC$ (where $\alpha$ can
    be any type).

  \item For all $\alpha \in \sT$, $I(\mname{is-con}_{\epsilon \tarrow
    o}^{\alpha})$ is the total function $f \in D_{\epsilon \tarrow o}$
    such that, for all constructions $\textbf{A}_\epsilon \in
    D_\epsilon$, $f(\textbf{A}_\epsilon) = \TRUE$ iff
    $\textbf{A}_\epsilon = \synbrack{\textbf{c}_\alpha}$ for some
    constant $\textbf{c}_\alpha \in \sC$.

  \item $I(\mname{app}_{\epsilon \tarrow \epsilon \tarrow \epsilon})$
    is the partial function $f \in D_{\epsilon \tarrow \epsilon \tarrow
      \epsilon}$ such that, for all constructions
    $\textbf{A}_\epsilon, \textbf{B}_\epsilon \in D_\epsilon$, if
    $\mname{app}_{\epsilon \tarrow \epsilon \tarrow \epsilon} \,
    \textbf{A}_\epsilon \, \textbf{B}_\epsilon$ is a construction,
    then $f(\textbf{A}_\epsilon)(\textbf{B}_\epsilon) =
    \mname{app}_{\epsilon \tarrow \epsilon \tarrow \epsilon} \,
    \textbf{A}_\epsilon \, \textbf{B}_\epsilon$, and otherwise
    $f(\textbf{A}_\epsilon)(\textbf{B}_\epsilon)$ is undefined.

  \item $I(\mname{abs}_{\epsilon \tarrow \epsilon \tarrow \epsilon})$
    is the partial function $f \in D_{\epsilon \tarrow \epsilon \tarrow
      \epsilon}$ such that, for all constructions
    $\textbf{A}_\epsilon, \textbf{B}_\epsilon \in D_\epsilon$, if
    $\mname{abs}_{\epsilon \tarrow \epsilon \tarrow \epsilon} \,
    \textbf{A}_\epsilon \, \textbf{B}_\epsilon$ is a construction,
    then $f(\textbf{A}_\epsilon)(\textbf{B}_\epsilon)
    =\mname{abs}_{\epsilon \tarrow \epsilon \tarrow \epsilon} \,
    \textbf{A}_\epsilon \, \textbf{B}_\epsilon$, and otherwise
    $f(\textbf{A}_\epsilon)(\textbf{B}_\epsilon)$ is undefined.

  \item $I(\mname{cond}_{\epsilon \tarrow \epsilon \tarrow \epsilon
    \tarrow \epsilon})$ is the partial function $f \in D_{\epsilon
    \tarrow \epsilon \tarrow \epsilon \tarrow \epsilon}$ such that,
    for all constructions $\textbf{A}_\epsilon, \textbf{B}_\epsilon,
    \textbf{C}_\epsilon \in D_\epsilon$, if $\mname{cond}_{\epsilon
      \tarrow \epsilon \tarrow \epsilon \tarrow \epsilon} \,
    \textbf{A}_\epsilon \, \textbf{B}_\epsilon \, \textbf{C}_\epsilon$
    is a construction, then
    $f(\textbf{A}_\epsilon)(\textbf{B}_\epsilon)(\textbf{C}_\epsilon)
    =\mname{cond}_{\epsilon \tarrow \epsilon \tarrow \epsilon \tarrow
      \epsilon} \, \textbf{A}_\epsilon \, \textbf{B}_\epsilon \,
    \textbf{C}_\epsilon$, and otherwise
    $f(\textbf{A}_\epsilon)(\textbf{B}_\epsilon)(\textbf{C}_\epsilon)$
    is undefined.

  \item $I(\mname{quo}_{\epsilon \tarrow \epsilon})$ is the total
    function $f \in D_{\epsilon \tarrow \epsilon}$ such that, for all
    constructions $\textbf{A}_\epsilon \in D_\epsilon$,
    $f(\textbf{A}_\epsilon) = \mname{quo}_{\epsilon \tarrow \epsilon}
    \, \textbf{A}_\epsilon$.

\iffalse
  \item $I(\mname{is-expr}_{\epsilon \tarrow o})$ is the total
    function $f \in D_{\epsilon \tarrow o}$ such that, for all
    constructions $\textbf{A}_\epsilon \in D_\epsilon$,
    $f(\textbf{A}_\epsilon) = \TRUE$ iff $\textbf{A}_\epsilon =
    \sE(\textbf{B}_\alpha)$ for some (eval-free) expression
    $\textbf{B}_\alpha$ (where $\alpha$ can be any type).
\fi

  \item For all $\alpha \in \sT$, $I(\mname{is-expr}_{\epsilon \tarrow
    o}^{\alpha})$ is the total function $f \in D_{\epsilon \tarrow o}$
    such that, for all constructions $\textbf{A}_\epsilon \in
    D_\epsilon$, $f(\textbf{A}_\epsilon) = \TRUE$ iff
    $\textbf{A}_\epsilon = \sE(\textbf{B}_\alpha)$ for some
    (eval-free) expression $\textbf{B}_\alpha$.

  \item $I(\sqsubset_{\epsilon \tarrow \epsilon \tarrow o})$ is the
    total function $f \in D_{\epsilon \tarrow \epsilon \tarrow
      \epsilon}$ such that, for all constructions
    $\textbf{A}_\epsilon, \textbf{B}_\epsilon \in D_\epsilon$,
    $f(\textbf{A}_\epsilon)(\textbf{B}_\epsilon) = \TRUE$ iff
    $\textbf{A}_\epsilon$ is a proper subexpression of
    $\textbf{B}_\epsilon$.

  \item $I(\mname{is-free-in}_{\epsilon \tarrow \epsilon \tarrow o})$
    is the total function $f \in D_{\epsilon \tarrow \epsilon \tarrow
      \epsilon}$ such that, for all constructions
    $\textbf{A}_\epsilon, \textbf{B}_\epsilon \in D_\epsilon$,
    $f(\textbf{A}_\epsilon)(\textbf{B}_\epsilon) = \TRUE$ iff
    $\textbf{A}_\epsilon = \synbrack{\textbf{x}_\alpha}$ for some
    $\textbf{x}_\alpha \in \sV$ and $\textbf{x}_\alpha$ is free in the
    expression $\textbf{C}_\beta$ such that $\textbf{B}_\epsilon =
    \sE(\textbf{C}_\beta)$.

\ee

An \emph{assignment} into a frame $\set{D_\alpha \;|\; \alpha \in
  \sT}$ is a function $\phi$ whose domain is $\sV$ such that
$\phi(\textbf{x}_\alpha) \in D_\alpha$ for each $\textbf{x}_\alpha \in
\sV$.  Given an assignment $\phi$, $\textbf{x}_\alpha \in \sV$, and $d
\in D_\alpha$, let $\phi[\textbf{x}_\alpha \mapsto d]$ be the
assignment $\psi$ such that $\psi(\textbf{x}_\alpha) = d$ and
$\psi(\textbf{y}_\beta) = \phi(\textbf{y}_\beta)$ for all variables
$\textbf{y}_\beta$ distinct from $\textbf{x}_\alpha$.  For an
interpretation $\sM = (\set{D_\alpha \;|\; \alpha \in \sT}, I)$,
$\mname{assign}(\sM)$ is the set of assignments into the frame of
$\sM$.

\subsection{General Models} \label{subsec:gen-models}

An interpretation $\sM = (\set{D_\alpha \;|\; \alpha \in \sT), I}$ is
a \emph{general model} for {\churchuqe} if there is a partial binary
valuation function $V^{\cal M}$ such that, for all assignments $\phi
\in \mname{assign}(\sM)$ and expressions $\textbf{D}_\delta$, either
$V^{\cal M}_{\phi}(\textbf{D}_\delta) \in D_\delta$ or $V^{\cal
  M}_{\phi}(\textbf{D}_\delta)$ is undefined\footnote{We write
  $V^{\cal M}_{\phi}(\textbf{D}_\delta)$ instead of $V^{\cal
    M}(\phi,\textbf{D}_\delta)$.} and each of the following conditions
is satisfied:

\be

  \item Let $\textbf{D}_\delta \in \sV$.  Then $V^{\cal
    M}_{\phi}(\textbf{D}_\delta) = \phi(\textbf{D}_\delta)$.

  \item Let $\textbf{D}_\delta \in \sC$.  Then $V^{\cal
    M}_{\phi}(\textbf{D}_\delta) = I(\textbf{D}_\delta)$.

  \item Let $\textbf{D}_\delta$ be $\textbf{F}_{\alpha \tarrow \beta}
    \, \textbf{A}_\alpha$.  If $V^{\cal M}_{\phi}(\textbf{F}_{\alpha
      \tarrow \beta})$ is defined, $V^{\cal
      M}_{\phi}(\textbf{A}_\alpha)$ is defined, and the function
    $V^{\cal M}_{\phi}(\textbf{F}_{\alpha \tarrow \beta})$ is defined
    at the argument $V^{\cal M}_{\phi}(\textbf{A}_\alpha)$,
    then \[V^{\cal M}_{\phi}(\textbf{D}_\delta) = V^{\cal
      M}_{\phi}(\textbf{F}_{\alpha \tarrow \beta})(V^{\cal
      M}_{\phi}(\textbf{A}_\alpha)).\] Otherwise, $V^{\cal
      M}_{\phi}(\textbf{D}_\delta) = \FALSE$ if $\beta = o$ and
    $V^{\cal M}_{\phi}(\textbf{D}_\delta)$ is undefined if $\beta
    \not= o$.

  \item Let $\textbf{D}_\delta$ be $\LambdaApp \textbf{x}_\alpha \mdot
    \textbf{B}_\beta$.  Then $V^{\cal M}_{\phi}(\textbf{D}_\delta)$ is
    the (partial or total) function $f \in D_{\alpha \tarrow \beta}$
    such that, for each $d \in D_\alpha$, $f(d) = V^{\cal
      M}_{\phi[{\bf x}_\alpha \mapsto d]}(\textbf{B}_\beta)$ if
    $V^{\cal M}_{\phi[{\bf x}_\alpha \mapsto d]}(\textbf{B}_\beta)$ is
    defined and $f(d)$ is undefined if $V^{\cal M}_{\phi[{\bf
          x}_\alpha \mapsto d]}(\textbf{B}_\beta)$ is undefined.

  \item Let $\textbf{D}_\delta$ be $(\If \; \textbf{A}_o \;
    \textbf{B}_\alpha \; \textbf{C}_\alpha)$.  If $V^{\cal
      M}_{\phi}(\textbf{A}_o) = \TRUE$ and $V^{\cal
      M}_{\phi}(\textbf{B}_\alpha)$ is defined, then $V^{\cal
      M}_{\phi}(\textbf{D}_\delta) = V^{\cal
      M}_{\phi}(\textbf{B}_\alpha)$.  If $V^{\cal
      M}_{\phi}(\textbf{A}_o) = \FALSE$ and $V^{\cal
      M}_{\phi}(\textbf{C}_\alpha)$ is defined, then $V^{\cal
      M}_{\phi}(\textbf{D}_\delta) = V^{\cal
      M}_{\phi}(\textbf{C}_\alpha)$.  Otherwise, $V^{\cal
      M}_{\phi}(\textbf{D}_\delta)$ is undefined.

  \item Let $\textbf{D}_\delta$ be $\synbrack{\textbf{A}_\alpha}$.
    Then $V^{\cal M}_{\phi}(\textbf{D}_\delta) =
    \sE(\textbf{A}_\alpha)$.

  \item Let $\textbf{D}_\delta$ be
    $\sembrack{\textbf{A}_\epsilon}_\beta$.  If $V^{\cal
    M}_{\phi}(\mname{is-expr}_{\epsilon \tarrow o}^{\beta} \,
    \textbf{A}_\epsilon) = \TRUE$, then \[V^{\cal
      M}_{\phi}(\textbf{D}_\delta) = V^{\cal
      M}_{\phi}(\sE^{-1}(V^{\cal M}_{\phi}(\textbf{A}_\epsilon))).\]
    Otherwise, $V^{\cal
      M}_{\phi}(\textbf{D}_\delta) = \FALSE$ if $\beta = o$ and
    $V^{\cal M}_{\phi}(\textbf{D}_\delta)$ is undefined if $\beta
    \not= o$.

\ee 

\begin{prop} \label{prop:gen-models-exist}
General models for {\churchuqe} exist.
\end{prop}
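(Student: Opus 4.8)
The plan is to exhibit one concrete interpretation and then show that the valuation function it requires actually exists. First I would build the \emph{full frame} $\set{D_\alpha \mid \alpha \in \sT}$: take $D_o = \set{\TRUE,\FALSE}$, let $D_\epsilon$ be the set of constructions, choose each remaining base domain $D_\alpha$ (for $\alpha \in \sB$ with $\alpha \not\in \set{o,\epsilon}$) to be an arbitrary nonempty set, and let $D_{\alpha \tarrow \beta}$ consist of \emph{all} total functions from $D_\alpha$ to $D_\beta$ when $\beta = o$ and of \emph{all} partial and total functions when $\beta \not= o$. Then I would define the interpretation function $I$ on the logical constants by the clauses given in Section~\ref{sec:semantics}; since the frame is full, each prescribed (partial) function automatically lies in the relevant domain, so $I$ is well defined. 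In particular $I(\iota_{(\alpha \tarrow o) \tarrow \alpha})$ is genuinely partial, which is permitted precisely because its result type $\alpha \not= o$ falls in the partial-function regime.

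The heart of the argument is producing a valuation function $V^{\cal M}$ satisfying the seven conditions defining a general model (Section~\ref{subsec:gen-models}), and the only real obstacle is the evaluation clause, whose value $V^{\cal M}_{\phi}(\sE^{-1}(V^{\cal M}_{\phi}(\textbf{A}_\epsilon)))$ refers to the valuation of an expression that need not be a subexpression of $\sembrack{\textbf{A}_\epsilon}_\beta$ and could be arbitrarily large. The observation that dissolves the apparent circularity is that $\sE$ maps eval-free expressions onto the constructions, so $\sE^{-1}$ always returns an \emph{eval-free} expression. I would therefore define $V^{\cal M}$ in two stages. In the first stage, define $V^{\cal M}_{\phi}$ on all eval-free expressions by ordinary structural recursion using the first six clauses (the evaluation clause never applies, as eval-free expressions contain no evaluations). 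In the second stage, extend $V^{\cal M}_{\phi}$ to the remaining expressions by structural recursion using all seven clauses; the two stages agree on eval-free expressions since only the first six clauses are involved there. In the evaluation clause the inner occurrence $V^{\cal M}_{\phi}(\textbf{A}_\epsilon)$ is a recursive call on a \emph{proper subexpression}, while the outer occurrence is a \emph{lookup} into the already-fixed first-stage function applied to the eval-free expression $\sE^{-1}(V^{\cal M}_{\phi}(\textbf{A}_\epsilon))$. Because every second-stage recursive call is on a proper subexpression and every appeal to an eval-free valuation has been settled in advance, the recursion is well founded and determines $V^{\cal M}$ uniquely.

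It then remains to check that the constructed $V^{\cal M}$ meets all the requirements, and most cases are routine verifications that values land in the stated domains. Type correctness of the evaluation clause follows from its guard $V^{\cal M}_{\phi}(\mname{is-expr}_{\epsilon \tarrow o}^{\beta} \, \textbf{A}_\epsilon) = \TRUE$: since an application of type $o$ is always defined, the guard being true forces $V^{\cal M}_{\phi}(\textbf{A}_\epsilon)$ to be a construction of the form $\sE(\textbf{C}_\beta)$ with $\textbf{C}_\beta$ eval-free of type $\beta$, so $\sE^{-1}$ delivers an expression of type $\beta$ and its value lies in $D_\beta$ or is undefined. The one global fact needed to keep the frame honest is that \emph{every expression of type $o$ receives a defined value}; I would prove this by a short structural induction, using that the application and evaluation clauses default to $\FALSE$ at type $o$ and that a conditional $(\If \; \textbf{A}_o \; \textbf{B}_o \; \textbf{C}_o)$ has type-$o$ branches that are defined by the induction hypothesis. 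This guarantees that the function produced for an abstraction $\LambdaApp \textbf{x}_\alpha \mdot \textbf{B}_o$ is total and hence genuinely belongs to $D_{\alpha \tarrow o}$, as the frame demands. With these verifications in place, $\cal M$ together with $V^{\cal M}$ is a general model, so general models exist; the construction parallels those for {\churchqe}~\cite{FarmerArxiv16} and {\qzerou}~\cite{Farmer08a}.
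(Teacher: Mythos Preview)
Your construction is correct and is precisely the standard argument the paper alludes to: the paper's own proof consists of a single sentence pointing to the analogous proposition for {\churchqe} in~\cite{FarmerArxiv16}, and you have supplied the expected full-frame construction together with the two-stage definition of $V^{\cal M}$ (first eval-free, then arbitrary expressions) that resolves the apparent circularity in the evaluation clause, adapted appropriately for partiality. Your additional observation that every type-$o$ expression receives a defined value is exactly the extra check needed in the {\churchuqe} setting to ensure that abstractions with body type $o$ yield \emph{total} functions and hence land in $D_{\alpha \tarrow o}$, which is a point not present in the {\churchqe} version but essential here.
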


\begin{proof}
The proof is similar to the proof of the analogous proposition
in~\cite{FarmerArxiv16}. \hfill $\Box$
\end{proof}

Other theorems about the semantics of {\churchuqe} are the same or
very similar to the theorems about the semantics of {\churchqe} given
in~\cite{FarmerArxiv16}.

Let $\sM$ be a general model for {\churchuqe}.  $\textbf{A}_o$ is
\emph{valid in $\sM$}, written $\sM \vDash \textbf{A}_o$, if $V^{\cal
  M}_{\phi}(\textbf{A}_o) = \TRUE$ for all $\phi \in
\mname{assign}(\sM)$.  $\textbf{A}_o$ is \emph{valid in {\churchuqe}},
written ${} \vDash \textbf{A}_o$, if $\textbf{A}_o$ is valid in every
general model for {\churchuqe}.  An expression $\textbf{B}_\beta$ is
\emph{semantically closed} if no variable ``is effective in'' it, i.e,
\[\vDash \ForallApp \textbf{y}_\alpha \mdot ((\LambdaApp
\textbf{x}_\alpha \mdot \textbf{B}_\beta) \, \textbf{y}_\alpha =
\textbf{B}_\beta)\] holds for all variables $\textbf{x}_\alpha$ (where
$\textbf{y}_\alpha$ is any variable of type $\alpha$ that differs from
$\textbf{x}_\alpha$).  It is easy to show that every closed eval-free
expression is semantically closed.  If $\textbf{B}_\beta$ is
semantically closed, then $V_{\phi}^{\cal M}(\textbf{B}_\beta)$ does
not depend on $\phi \in \mname{assign}(\sM)$.  The notion of
``$\textbf{x}_\alpha$ is effective in $\textbf{B}_\beta$'' is
discussed in detail in~\cite{FarmerArxiv16}.

Let $T=(L,\Gamma)$ be a theory of {\churchuqe} and $\textbf{A}_o$ be a
formula of $T$.  A \emph{general model for $T$} is a general model
$\sM$ for {\churchuqe} such that $\sM \vDash \textbf{A}_o$ for all
$\textbf{A}_o \in \Gamma$.  $\textbf{A}_o$ is \emph{valid in $T$},
written $T \vDash \textbf{A}_o$, if $\textbf{A}_o$ is valid in every
general model for $T$.  $T$ is \emph{normal} if each member of
$\Gamma$ is semantically closed.

\iffalse 

The proofs of the following two theorems are exactly the same as the
proofs of the corresponding theorems in~\cite{FarmerArxiv16}.

\begin{thm}[Law of Quotation] \label{thm:sem-quotation}
$\synbrack{\textbf{A}_\alpha} = \sE(\textbf{A}_\alpha)$ is valid in
  {\churchuqe}.
\end{thm}

\begin{thm}[Law of Disquotation] \label{thm:sem-disquotation}
$\sembrack{\synbrack{\textbf{A}_\alpha}}_\alpha = \textbf{A}_\alpha$
  is valid in {\churchuqe}.
\end{thm}

\fi % KEEP

\section{Theory Morphisms}\label{sec:morphisms}

In this section we define a ``semantic morphism'' of {\churchuqe} that
maps the valid semantically closed formulas of one normal theory to
the valid semantically closed formulas of another normal theory.
Theory morphisms usually map base types to types.  By exploiting the
support for partial functions in {\churchuqe}, we introduce a more
general notion of theory morphism that maps base types to semantically
closed predicates that represent sets of values of the same type.
This requires mapping expressions denoting functions on the base type
to expressions denoting functions with domains restricted to the
semantically closed predicate.

For $i =1,2$, let $T_i = (L_i,\Gamma_i)$ be a normal theory of
{\churchuqe} where, for some $\sB_i \subseteq \sB$ and $\sC_i
\subseteq \sC$, $L_i$ is the set of all $(\sB_i,\sC_i)$-expressions.
Also for $i =1,2$, let $\sT_i$ be the set of all $\sB_i$-types and
$\sV_i$ be the set of all variables in $L_i$.  Finally, let $\sP_2$ be
the set of all semantically closed predicates in $L_2$.

\subsection{Translations}

In this section, we will define a translation from $T_1$ to $T_2$ to
be a pair $(\mu,\nu)$ of functions where $\mu$ interprets the base
types of $T_1$ and $\nu$ interprets the variables and constants of
$T_1$.  $\overline{\mu}$ and $\overline{\nu}$ will be canonical
extensions of $\mu$ and $\nu$ to the types and expressions of $T_1$,
respectively.

Define $\tau$ to be the function that maps a predicate of type
$\alpha \tarrow o$ to the type~$\alpha$.  When $\textbf{p}_{\alpha
  \tarrow o}$ and $\textbf{q}_{\beta \tarrow o} $ are semantically
closed predicates, let
\[\textbf{p}_{\alpha \tarrow o} \rightharpoonup \textbf{q}_{\beta \tarrow o}\] 
be an abbreviation for the following semantically closed predicate of
type $(\alpha \tarrow \beta) \tarrow o$:
\[\LambdaApp f_{\alpha \tarrow \beta} \mdot 
\ForallApp x_\alpha \mdot (f_{\alpha \tarrow \beta} \, x_\alpha \not=
\Undefined_\beta \Implies (\textbf{p}_{\alpha \tarrow o} \, x_\alpha
\And \textbf{q}_{\beta \tarrow o} \, (f_{\alpha \tarrow \beta} \,
x_\alpha))).\] If $\beta = o$ [$\beta \not= o$], $\textbf{p}_{\alpha
  \tarrow o} \tarrow \textbf{q}_{\beta \tarrow o}$ represents the set
of total [partial and total] functions from the set of values
represented by $\textbf{p}_{\alpha \tarrow o}$ to the set of values
represented by $\textbf{q}_{\beta \tarrow o}$.  Notice
that \[\tau(\textbf{p}_{\alpha \tarrow o} \rightharpoonup
\textbf{q}_{\beta \tarrow o}) = \alpha \tarrow \beta =
\tau(\textbf{p}_{\alpha \tarrow o}) \tarrow \tau(\textbf{q}_{\beta
  \tarrow o}).\]

Given a total function $\mu : \sB_1 \tarrow \sP_2$, let
$\overline{\mu} : \sT_1 \tarrow \sP_2$ be the canonical extension of
$\mu$ that is defined inductively as follows:

\be

  \item If $\alpha \in \sB_1$, $\overline{\mu}(\alpha) = \mu(\alpha)$.

  \item If $\alpha \tarrow \beta \in \sT_1$, $\overline{\mu}(\alpha
    \tarrow \beta) = \overline{\mu}(\alpha) \rightharpoonup
    \overline{\mu}(\beta)$.

\ee

\noindent
It is easy to see that $\mu$ is well-defined and total.  

A \emph{translation from $T_1$ to $T_2$} is a pair $\Phi = (\mu,\nu)$,
where $\mu : \sB_1 \tarrow \sP_2$ is total and $\nu : \sV_1 \cup \sC_1
\tarrow \sV_2 \cup \sC_2$ is total and injective, such that:

\be

  \item $\mu(o) = \LambdaApp x_o \mdot T_o$.

  \item $\mu(\epsilon) = \LambdaApp x_\epsilon \mdot T_o$.

  \item For each $\textbf{x}_\alpha \in \sV_1$,
    $\nu(\textbf{x}_\alpha)$ is a variable in $\sV_2$ of type
    $\tau(\overline{\mu}(\alpha))$.

  \item For each $\textbf{c}_\alpha \in \sC_1$,
    $\nu(\textbf{c}_\alpha)$ is a constant in $\sC_2$ of type
    $\tau(\overline{\mu}(\alpha))$.

\ee

Throughout the rest of this section, let $\Phi = (\mu,\nu)$ be a
translation from $T_1$ to $T_2$. $\overline{\nu} : L_1 \tarrow L_2$ is
the canonical extension of $\nu$ defined inductively as follows:

\be

  \item If $\textbf{x}_\alpha \in \sV_1$,
    $\overline{\nu}(\textbf{x}_\alpha) = \nu(\textbf{x}_\alpha)$.

  \item If $\textbf{c}_\alpha \in \sC_1$,
    $\overline{\nu}(\textbf{c}_\alpha) = \nu(\textbf{c}_\alpha)$.

  \item If $\textbf{F}_{\alpha \tarrow \beta} \, \textbf{A}_\alpha \in
    L_1$, then $\overline{\nu}(\textbf{F}_{\alpha \tarrow \beta} \,
    \textbf{A}_\alpha) = \overline{\nu}(\textbf{F}_{\alpha \tarrow
      \beta}) \, \overline{\nu}(\textbf{A}_\alpha)$.

  \item If $\LambdaApp \textbf{x}_\alpha \mdot \textbf{B}_\beta \in
    L_1$, then $\overline{\nu}(\LambdaApp \textbf{x}_\alpha \mdot
    \textbf{B}_\beta) = {}$
    \[ \LambdaApp \overline{\nu}(\textbf{x}_\alpha)
    \mdot (\If \; (\overline{\mu}(\alpha) \,
    \overline{\nu}(\textbf{x}_\alpha)) \;
    \overline{\nu}(\textbf{B}_\beta) \;
    \Undefined_{\tau(\overline{\mu}(\beta))}).\]

  \item If $(\If \; \textbf{A}_o \; \textbf{B}_\alpha \;
    \textbf{C}_\alpha) \in L_1$, $\overline{\nu}(\If \; \textbf{A}_o
    \; \textbf{B}_\alpha \; \textbf{C}_\alpha) = (\If \;
    \overline{\nu}(\textbf{A}_o) \; \overline{\nu}(\textbf{B}_\alpha)
    \; \overline{\nu}(\textbf{C}_\alpha))$.

  \item If $\synbrack{\textbf{A}_\alpha} \in L_1$, then
    $\overline{\nu}(\synbrack{\textbf{A}_\alpha}) =
    \synbrack{\overline{\nu}(\textbf{A}_\alpha)}$.

  \item If $\sembrack{\textbf{A}_\epsilon}_{{\bf B}_\beta} \in L_1$,
    then $\overline{\nu}(\sembrack{\textbf{A}_\epsilon}_{{\bf
        B}_\beta}) =
    \sembrack{\overline{\nu}(\textbf{A}_\epsilon)}_{\overline{\nu}({\bf
        B}_\beta)}$.

\ee

\begin{lem}\label{lem:translation}

\be

  \item[]

  \item $\overline{\nu}$ is well-defined, total, and injective.

  \item If $\textbf{A}_\alpha \in L_1$, then
    $\overline{\nu}(\textbf{A}_\alpha)$ is an expression of type
    $\tau(\overline{\mu}(\alpha))$.

\ee
\end{lem}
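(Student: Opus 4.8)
The plan is to prove both parts simultaneously by structural induction on the expression $\textbf{A}_\alpha \in L_1$, following exactly the seven clauses in the inductive definition of $\overline{\nu}$. For each clause I would verify three things at once: that $\overline{\nu}$ is defined on that expression (well-definedness), that it produces an expression of type $\tau(\overline{\mu}(\alpha))$ (part~2), and I would track enough information to later assemble injectivity (part~1). The type claim and well-definedness are best proved together because the typing is what guarantees that each constructor application in the definition of $\overline{\nu}$ is legal in $L_2$.

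For the base cases, if $\textbf{A}_\alpha$ is a variable or constant, the type claim is immediate from clauses~3 and~4 of the definition of a translation, which stipulate that $\nu(\textbf{x}_\alpha)$ and $\nu(\textbf{c}_\alpha)$ have type $\tau(\overline{\mu}(\alpha))$. For function application $\textbf{F}_{\alpha \tarrow \beta} \, \textbf{A}_\alpha$, the induction hypothesis gives $\overline{\nu}(\textbf{F}_{\alpha \tarrow \beta})$ type $\tau(\overline{\mu}(\alpha \tarrow \beta))$, which by the definition of $\overline{\mu}$ and the remark that $\tau(\textbf{p}_{\alpha \tarrow o} \rightharpoonup \textbf{q}_{\beta \tarrow o}) = \tau(\textbf{p}_{\alpha \tarrow o}) \tarrow \tau(\textbf{q}_{\beta \tarrow o})$ equals $\tau(\overline{\mu}(\alpha)) \tarrow \tau(\overline{\mu}(\beta))$; combined with $\overline{\nu}(\textbf{A}_\alpha)$ having type $\tau(\overline{\mu}(\alpha))$, the application is well-typed with result type $\tau(\overline{\mu}(\beta))$, as required. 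The conditional, quotation, and evaluation clauses are routine once the type bookkeeping from the hypothesis is in hand; for evaluation I would note that $\overline{\nu}(\sembrack{\textbf{A}_\epsilon}_{\textbf{B}_\beta})$ has the type determined by $\overline{\nu}(\textbf{B}_\beta)$, which the hypothesis pins down.

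The case demanding real care is function abstraction, since its image under $\overline{\nu}$ is not a plain abstraction but wraps the body in a conditional guard $\If \; (\overline{\mu}(\alpha) \, \overline{\nu}(\textbf{x}_\alpha)) \; \overline{\nu}(\textbf{B}_\beta) \; \Undefined_{\tau(\overline{\mu}(\beta))}$. Here I must check that $\overline{\mu}(\alpha)$, a predicate of type $\tau(\overline{\mu}(\alpha)) \tarrow o$, applies legally to $\overline{\nu}(\textbf{x}_\alpha)$ of type $\tau(\overline{\mu}(\alpha))$, yielding a formula; that the two branches $\overline{\nu}(\textbf{B}_\beta)$ and $\Undefined_{\tau(\overline{\mu}(\beta))}$ share the common type $\tau(\overline{\mu}(\beta))$ demanded by the Conditional formation rule; and hence that the whole abstraction has type $\tau(\overline{\mu}(\alpha)) \tarrow \tau(\overline{\mu}(\beta)) = \tau(\overline{\mu}(\alpha \tarrow \beta))$. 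This is the step I expect to be the main obstacle, both because the guarded-body construction is the one genuinely novel piece of the translation and because it is where the $\rightharpoonup$ abbreviation and the canonical extension $\overline{\mu}$ must be shown to mesh correctly with the typing.

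Finally, for injectivity (part~1), I would argue that $\overline{\nu}$ preserves the syntactic shape of expressions: each of the seven clauses maps a given kind of expression to an expression built by the corresponding constructor (a variable/constant to a variable/constant, an application to an application, an abstraction to an abstraction wrapping a conditional, and so on), so the outermost constructor of $\overline{\nu}(\textbf{A}_\alpha)$ determines the kind of $\textbf{A}_\alpha$. Since $\nu$ is injective on $\sV_1 \cup \sC_1$ by hypothesis, the base cases are injective and distinguish variables from constants; the inductive cases then follow by decomposing the image and applying the induction hypothesis to each immediate subexpression. The only subtlety is the abstraction case, where I must observe that the guard subexpression is uniquely recoverable so that the underlying body $\textbf{B}_\beta$ and bound variable are determined, after which injectivity descends to the components.
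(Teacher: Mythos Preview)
Your proposal is correct and follows exactly the approach the paper takes: a simultaneous structural induction on expressions proving well-definedness, totality, the type claim, and injectivity together. The paper's own proof is a one-line remark that both parts ``are easily proved simultaneously by induction on the structure of expressions,'' so your write-up simply supplies the details the paper omits, with the abstraction case indeed being the only place where any real bookkeeping is needed.
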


\begin{proof} 
The two parts of the proposition are easily proved simultaneously by
induction on the structure of expressions. \hfill $\Box$
\end{proof}

\begin{rem}\em
We overcome the Constant Interpretation Problem mentioned in
section~\ref{sec:introduction} by requiring $\nu$ to injectively map
constants to constants which, by Lemma~\ref{lem:translation}, implies
that $\overline{\nu}$ injectively maps expressions to expressions.  We
will see in the next section that this requirement comes with a cost.
\end{rem}

A formula in $L_2$ is an \emph{obligation} of $\Phi$ if
it is one of the following formulas:

\be

  \item $\ForsomeApp x_{\tau(\mu(\alpha))} \mdot \mu(\alpha) \,
    x_{\tau(\mu(\alpha))}$ {\sglsp} where $\alpha \in \sB_1$.

  \item $\overline{\mu}(\alpha) \,
    \nu(\textbf{c}_\alpha)$ {\sglsp} where $\textbf{c}_\alpha \in
    \sC_1$.

  \item $\nu(=_{\alpha \tarrow \alpha \tarrow o}) = 
    \LambdaApp x_{\alpha'} \mdot \LambdaApp y_{\alpha'} \mdot
    (\If \; (\overline{\mu}(\alpha) \, x_{\alpha'} \And 
    \overline{\mu}(\alpha) \, y_{\alpha'}) \; 
    (x_{\alpha'} =_{\alpha' \tarrow \alpha' \tarrow o} y_{\alpha'}) \; 
    \Undefined_o)$\\
    where $\alpha \in \sT_1$ and $\alpha' = \tau(\overline{\mu}(\alpha))$.

  \item $\nu(\iota_{(\alpha \tarrow o) \tarrow \alpha}) =
    \LambdaApp x_{\alpha' \tarrow o} \mdot 
    (\If \; (\overline{\mu}(\alpha \tarrow o) \, x_{\alpha' \tarrow o}) \;
    (\iota_{(\alpha' \tarrow o) \tarrow \alpha'} \, x_{\alpha' \tarrow o}) \;
    \Undefined_{\alpha'})$\\
    where $\alpha \in \sT_1$ with $\alpha \not = o$ and 
    $\alpha' = \tau(\overline{\mu}(\alpha))$.

  \item \bsp $\nu(\textbf{c}_\alpha) = \textbf{c}_\alpha$ {\sglsp}
    where $\textbf{c}_\alpha$ is $\mname{is-var}_{\epsilon \tarrow
      o}$, $\mname{is-con}_{\epsilon \tarrow o}$,
    $\mname{app}_{\epsilon \tarrow \epsilon \tarrow \epsilon}$,
    $\mname{abs}_{\epsilon \tarrow \epsilon \tarrow \epsilon}$,
    $\mname{cond}_{\epsilon \tarrow \epsilon \tarrow \epsilon \tarrow \epsilon}$
    $\mname{quo}_{\epsilon \tarrow \epsilon}$,
    $\sqsubset_{\epsilon
      \tarrow \epsilon \tarrow o}$, or $\mname{is-free-in}_{\epsilon
      \tarrow \epsilon \tarrow o}$.\esp

\iffalse
$\mname{is-expr}_{\epsilon \tarrow o}$, 
\fi

  \item $\nu(\textbf{c}_{\alpha}^{\beta}) =
    \textbf{c}_{\alpha}^{\tau(\overline{\mu}(\beta))}$ {\sglsp} where
    $\textbf{c}_\alpha$ is $\mname{is-var}_{\epsilon \tarrow
      o}^{\beta}$, $\mname{is-con}_{\epsilon \tarrow o}^{\beta}$, or
    $\mname{is-expr}_{\epsilon \tarrow o}^{\beta}$ and $\beta \in
    \sT_1$.

  \item $\overline{\nu}(\textbf{A}_o)$ {\sglsp} where
    $\textbf{A}_o \in \Gamma_1$.

\ee

\noindent
Notice that each obligation of $\Phi$ is semantically closed.

\subsection{Semantic Morphisms}

A \emph{semantic morphism} from $T_1$ to $T_2$ is a translation
$(\mu,\nu)$ from $T_1$ to $T_2$ such that $T_1 \vDash \textbf{A}_o$
implies $T_2 \vDash \overline{\nu}(\textbf{A}_o)$ for all semantically
closed formulas $\textbf{A}_o$ of $T_1$.  (A \emph{syntactic morphism}
from $T_1$ to $T_2$ would be a translation $(\mu,\nu)$ from $T_1$ to
$T_2$ such that $T_1 \vdash_P \textbf{A}_o$ implies $T_2 \vdash_P
\overline{\nu}(\textbf{A}_o)$ for all semantically closed formulas
$\textbf{A}_o$ of $T_1$ where $P$ is some proof system for
         {\churchuqe}.)  We will prove a theorem (called the Semantic
         Morphism Theorem) that gives a sufficient condition for a
         translation to be a semantic morphism.

Assume $\sM_2 = (\set{D_{\alpha}^{2} \;|\; \alpha \in \sT},I_2)$ is a
general model for $T_2$.  Under the assumption that the obligations of
$\Phi$ are valid in $T_2$, we will extract a general model for $T_1$
from $\sM_2$.

For each $\alpha \in \sT_1$, define
$\underline{D}_{\tau(\overline{\mu}(\alpha))}^{2} \subseteq
D_{\tau(\overline{\mu}(\alpha))}^{2}$ as follows:

\be

  \item $\underline{D}_{\tau(\overline{\mu}(o))}^{2} =
    \underline{D}_{o}^{2} = D_{o}^{2} =
    \set{\TRUE,\FALSE}$.

  \item $\underline{D}_{\tau(\overline{\mu}(\epsilon))}^{2} =
    \underline{D}_{\epsilon}^{2} = {}$ \[\set{d \in
      D_{\epsilon}^{2}\;|\; d = V_{\phi}^{{\cal
          M}_2}(\overline{\nu}(\textbf{A}_\epsilon)) \mbox{ for some
        construction } \textbf{A}_\epsilon \in L_1}\] where $\phi$ is
    any member of $\mname{assign}(\sM_2)$.

  \item If $\alpha \in \sT_1 \setminus \set{o,\epsilon}$,
    $\underline{D}_{\tau(\overline{\mu}(\alpha))}^{2} = {}$\\ \[\set{d
    \in D_{\tau(\overline{\mu}(\alpha))}^{2} \;|\; V_{\phi}^{{\cal
        M}_2}(\overline{\mu}(\alpha))(d) = \TRUE}\] where $\phi$ is
    any member of $\mname{assign}(\sM_2)$.

\ee

For each $\alpha \in \sT_1$, define $\overline{D}_{\alpha}^{1}$
inductively as follows:

\be

  \item $\overline{D}_{o}^{1} = \set{\TRUE,\FALSE}$.

  \item $\overline{D}_{\epsilon}^{1}$ is the set of constructions of
    {\churchuqe}.

  \item If $\alpha \in \sB_1 \setminus \set{o,\epsilon}$,
    $\overline{D}_{\alpha}^{1} =
    \underline{D}_{\tau(\overline{\mu}(\alpha))}^{2}$.

  \item If $\alpha \tarrow \beta \in \sT_1$, then $\overline{D}_{\alpha
    \tarrow \beta}^{1}$ is the set of all \emph{total} functions from
    $\overline{D}_{\alpha}^{1}$ to $\overline{D}_{\beta}^{1}$ if
    $\beta = o$ and the set of all \emph{partial and total} functions
    from $\overline{D}_{\alpha}^{1}$ to $\overline{D}_{\beta}^{1}$ if
    $\beta \not= o$.

\ee

For each $\alpha \in \sT_1$, define $\rho_\alpha :
\underline{D}_{\tau(\overline{\mu}(\alpha))}^{2} \tarrow \overline{D}_{\alpha}^{1}$
inductively as follows:

\be

  \item If $d \in \underline{D}_{\epsilon}^{2}$, $\rho_\epsilon(d)$ is
    the unique construction $\textbf{A}_\epsilon$ such that
    $\overline{\nu}(\textbf{A}_\epsilon) = d$.

  \item If $\alpha \in \sB_1 \setminus \set{\epsilon}$ and $d \in
    \underline{D}_{\tau(\overline{\mu}(\alpha))}^{2}$, $\rho_\alpha(d)
    = d$.

  \item If $\alpha \tarrow \beta \in \sT_1$ and $f \in
    \underline{D}_{\tau(\overline{\mu}(\alpha \tarrow \beta))}^{2}$,
    $\rho_{\alpha \tarrow \beta}(f)$ is the unique function $g \in
    \overline{D}_{\alpha \tarrow \beta}^{1}$ such that, for all $d \in
    \underline{D}_{\tau(\overline{\mu}(\alpha))}^{2}$, either $f(d)$
    and $g(\rho_{\alpha}(d))$ are both defined and $\rho_{\beta}(f(d))
    = g(\rho_{\alpha}(d))$ or they are both undefined.

\ee

\begin{lem}
If $\alpha \in \sT_1$, $\rho_\alpha :
\underline{D}_{\tau(\overline{\mu}(\alpha))}^{2} \tarrow
\overline{D}_{\alpha}^{1}$ is well defined, total, and injective.
\end{lem}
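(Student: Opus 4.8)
The plan is to argue by induction on the structure of the type $\alpha \in \sT_1$, establishing well-definedness, totality, and injectivity simultaneously; for the inductive step it helps to keep in mind the image $\rho_\beta[\underline{D}_{\tau(\overline{\mu}(\beta))}^2]$ inside $\overline{D}_\beta^1$, on which alone the defining condition pins down the function being constructed. There are three base cases ($\alpha = o$, $\alpha = \epsilon$, and $\alpha \in \sB_1 \setminus \set{o,\epsilon}$) and one inductive case ($\alpha = \beta \tarrow \gamma$).

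For $\alpha = o$ and for $\alpha \in \sB_1 \setminus \set{o,\epsilon}$ the definition makes $\rho_\alpha$ the identity map: in the first case both $\underline{D}_o^2$ and $\overline{D}_o^1$ equal $\set{\TRUE,\FALSE}$, and in the second $\overline{D}_\alpha^1$ is defined to be $\underline{D}_{\tau(\overline{\mu}(\alpha))}^2$, so there is nothing to check. For $\alpha = \epsilon$, well-definedness and injectivity both reduce to the injectivity of $\overline{\nu}$ from Lemma~\ref{lem:translation}: given $d \in \underline{D}_\epsilon^2$, by definition $d = V^{\sM_2}_\phi(\overline{\nu}(\textbf{A}_\epsilon))$ for some construction $\textbf{A}_\epsilon \in L_1$, and since $\overline{\nu}$ carries a construction of $L_1$ to a construction (which evaluates to itself in $\sM_2$) we get $d = \overline{\nu}(\textbf{A}_\epsilon)$. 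Thus a witness $\textbf{A}_\epsilon$ exists, it is unique because $\overline{\nu}$ is injective, and the same injectivity yields injectivity of $\rho_\epsilon$. Totality is then immediate, and $\rho_\epsilon(d)$ is a construction, hence lies in $\overline{D}_\epsilon^1$.

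For the inductive step $\alpha = \beta \tarrow \gamma$, assume the claim for $\beta$ and $\gamma$. Given $f \in \underline{D}_{\tau(\overline{\mu}(\beta \tarrow \gamma))}^2$, I first unwind the predicate $\overline{\mu}(\beta) \rightharpoonup \overline{\mu}(\gamma)$ that $f$ satisfies: for every $d$, if $f(d)$ is defined then $\overline{\mu}(\beta)\,d$ and $\overline{\mu}(\gamma)\,(f(d))$ both hold. Equivalently, $f(d) \in \underline{D}_{\tau(\overline{\mu}(\gamma))}^2$ whenever $f(d)$ is defined and $d \in \underline{D}_{\tau(\overline{\mu}(\beta))}^2$, and, crucially, $f(d)$ is \emph{undefined} whenever $d \notin \underline{D}_{\tau(\overline{\mu}(\beta))}^2$. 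Using that $\rho_\beta$ is total and injective, define $g$ on the image of $\rho_\beta$ by $g(\rho_\beta(d)) = \rho_\gamma(f(d))$ when $f(d)$ is defined and $g(\rho_\beta(d))$ undefined otherwise; this is unambiguous because $\rho_\beta$ is injective, and $\rho_\gamma(f(d))$ is legitimate by the observation just made together with totality of $\rho_\gamma$. On arguments outside the image of $\rho_\beta$, take $g$ to be $\FALSE$ when $\gamma = o$ and undefined when $\gamma \not= o$, which places $g$ in $\overline{D}_{\beta \tarrow \gamma}^1$ (the demand that $g$ be total when $\gamma = o$ is met because $f$, having codomain type $o$, is total). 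This establishes well-definedness and totality. For injectivity, suppose $\rho_{\beta \tarrow \gamma}(f) = \rho_{\beta \tarrow \gamma}(f')$; comparing values at $\rho_\beta(d)$ for each $d \in \underline{D}_{\tau(\overline{\mu}(\beta))}^2$ and invoking injectivity of $\rho_\gamma$ gives $f(d) = f'(d)$ on $\underline{D}_{\tau(\overline{\mu}(\beta))}^2$, while outside this set both $f$ and $f'$ are undefined by the predicate observation, so $f = f'$.

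I expect the inductive case to be the main obstacle, with two points needing care. The first is the uniqueness half of well-definedness: the defining condition constrains $g$ only on the image of $\rho_\beta$, so I must fix $g$ off that image (by the $\FALSE$/undefined convention above) to make ``the unique $g$'' genuinely unique, and check that this keeps $g$ inside $\overline{D}_{\beta \tarrow \gamma}^1$ — the subcase $\gamma = o$, where totality is demanded, is the one to watch. The second is the injectivity step, which is not purely formal: it relies essentially on the $\rightharpoonup$ predicate forcing every $f \in \underline{D}_{\tau(\overline{\mu}(\beta \tarrow \gamma))}^2$ to be undefined outside $\underline{D}_{\tau(\overline{\mu}(\beta))}^2$, since agreement of $f$ and $f'$ merely on $\underline{D}_{\tau(\overline{\mu}(\beta))}^2$ would otherwise be too weak to conclude $f = f'$.
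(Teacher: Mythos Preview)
Your proposal is correct and follows essentially the same approach as the paper: induction on the structure of $\alpha \in \sT_1$, with the $\epsilon$ base case resting on the injectivity of $\overline{\nu}$ (Lemma~\ref{lem:translation}) together with the fact that constructions evaluate to themselves. The paper's proof is a two-sentence sketch of exactly this argument; your write-up simply supplies the details of the inductive step (the uniqueness convention off the image of $\rho_\beta$ and the role of the $\rightharpoonup$ predicate in injectivity) that the paper leaves implicit.
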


\begin{proof}
This lemma is proved by induction on $\alpha \in \sT_1$.
$\rho_\epsilon$ is well defined since $V_{\phi}^{{\cal M}_2}$ is
identity function on constructions and $\overline{\nu}$ is injective
by Lemma~\ref{lem:translation}. \hfill $\Box$
\end{proof}

For each $\alpha \in \sT_1$, define $D_{\alpha}^{1} \subseteq
\overline{D}_{\alpha}^{1}$ as follows:

\be

  \item If $\alpha \in \sB_1$, $D_{\alpha}^{1} =
    \overline{D}_{\alpha}^{1}$.

  \item If $\alpha \tarrow \beta \in \sT_1$, $D_{\alpha \tarrow
    \beta}^{1}$ is the range of $\rho_{\alpha \tarrow \beta}$.

  \item If $\alpha \in \sB \setminus \sB_1$, $D_{\alpha}^{1}$ is any
    nonempty set.

  \item If $\alpha \tarrow \beta \in \sT \setminus \sT_1$,
    $D_{\alpha}^{1}$ is the set of all \emph{total} functions from
    $D_{\alpha}^{1}$ to $D_{\beta}^{1}$ if $\beta = o$ and the set of
    all \emph{partial and total} functions from $D_{\alpha}^{1}$ to
    $D_{\beta}^{1}$ if $\beta \not= o$.

\ee

\noindent
For $\textbf{c}_\alpha \in \sC_1$, define $I_1(\textbf{c}_\alpha) =
\rho_\alpha(V_{\phi}^{{\cal M}_2}(\overline{\nu}(\textbf{c}_\alpha)))$
where $\phi$ is any member of $\mname{assign}(\sM_2)$.  Finally,
define $\sM_1 = (\set{\sD_{\alpha}^{1} \;|\; \alpha \in \sT},I_1)$.

\begin{lem}\label{lem:is-model}
\bsp Suppose each obligation of $\Phi$ is valid in $\sM_2$.  Then
$\sM_1$ is a general model for $T_2$.  \esp
\end{lem}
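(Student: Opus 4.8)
The plan is to build the valuation function of $\sM_1$ out of that of $\sM_2$ by transporting values along the injections $\rho_\alpha$, and then to check that the result satisfies every general-model condition and validates $\Gamma_1$, so that $\sM_1$ is a general model for $T_1$ (the theory being extracted). First I would associate to each assignment $\phi \in \mname{assign}(\sM_1)$ whose values lie in the ranges of the $\rho_\alpha$ a companion assignment $\widehat{\phi} \in \mname{assign}(\sM_2)$ defined by $\widehat{\phi}(\nu(\textbf{x}_\alpha)) = \rho_\alpha^{-1}(\phi(\textbf{x}_\alpha))$ for $\textbf{x}_\alpha \in \sV_1$, extended arbitrarily elsewhere; this is legitimate because $\nu$ is injective and variable-valued (translation conditions~3--4) and each $\rho_\alpha$ is injective by the preceding lemma, and only the values on the $\nu(\textbf{x}_\alpha)$ matter since these are the sole free variables occurring in any $\overline{\nu}(\textbf{A}_\alpha)$. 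I would then put, for $\textbf{A}_\alpha \in L_1$,
\[ V_\phi^{{\cal M}_1}(\textbf{A}_\alpha) \QuasiEqual \rho_{\alpha}(V_{\widehat{\phi}}^{{\cal M}_2}(\overline{\nu}(\textbf{A}_\alpha))), \]
reading the right side as undefined whenever $V_{\widehat{\phi}}^{{\cal M}_2}(\overline{\nu}(\textbf{A}_\alpha)) \notin \underline{D}_{\tau(\overline{\mu}(\alpha))}^{2}$. By Lemma~\ref{lem:translation} the expression $\overline{\nu}(\textbf{A}_\alpha)$ has type $\tau(\overline{\mu}(\alpha))$, so $\rho_\alpha$ is applied to an element of the correct domain. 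The whole proof then reduces to one claim, proved by induction on the structure of $\textbf{A}_\alpha \in L_1$: this $V^{{\cal M}_1}$ obeys each of the seven valuation clauses in the definition of a general model. (The values on expressions outside $L_1$ cause no trouble, since the domains for types outside $\sT_1$ are full function spaces and the remaining constants may be interpreted freely, so a total valuation exists by the argument behind Proposition~\ref{prop:gen-models-exist}.)

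Before the induction I would settle the structural prerequisites. That $\set{D_\alpha^1}$ is a frame requires only nonemptiness of the individual domains, which for $\alpha \in \sB_1$ follows from obligation~1: its validity in $\sM_2$ forces some element of $D_{\tau(\mu(\alpha))}^2$ to satisfy $\mu(\alpha)$, i.e.\ $\underline{D}_{\tau(\overline{\mu}(\alpha))}^2 = \overline{D}_\alpha^1 = D_\alpha^1 \neq \emptyset$. That $I_1$ meets the interpretation conditions on the logical constants is exactly what obligations~3--6 supply: each pins down $V^{{\cal M}_2}(\overline{\nu}(\textbf{c}_\alpha))$ for a logical constant $\textbf{c}_\alpha$, and transporting through $\rho$ delivers the prescribed function (e.g.\ obligation~3 makes $I_1(=_{\alpha \tarrow \alpha \tarrow o})$ the equality predicate restricted to $D_\alpha^1$, and obligation~4 makes $I_1(\iota_{(\alpha \tarrow o) \tarrow \alpha})$ the description operator). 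In the induction proper the variable and constant clauses are immediate: the former from the definition of $\widehat{\phi}$, since $\rho_\alpha \circ \rho_\alpha^{-1}$ is the identity; the latter from the definition of $I_1$ together with the remark that every obligation, hence each $\overline{\nu}(\textbf{c}_\alpha)$, is semantically closed, so $\widehat{\phi}$ is irrelevant.

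For the compound clauses I would push $\overline{\nu}$ through the constructor, apply the induction hypotheses, and invoke the defining commutation property of $\rho_{\alpha \tarrow \beta}$ to match the two sides while tracking undefinedness on the nose. Application and conditional go through routinely this way, and quotation reduces to the fact that $\rho_\epsilon$ is $\overline{\nu}^{-1}$ on constructions. Once the commutation lemma is in hand, validity of the axioms is immediate: for $\textbf{A}_o \in \Gamma_1$ obligation~7 gives that $\overline{\nu}(\textbf{A}_o)$ is valid in $\sM_2$, and since $\rho_o$ is the identity on $\set{\TRUE,\FALSE}$ the commutation equation yields $V_\phi^{{\cal M}_1}(\textbf{A}_o) = \TRUE$; as $\Gamma_1$ is semantically closed (normality), this holds for every $\phi$, so $\sM_1 \vDash \textbf{A}_o$.

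The hard part will be the abstraction and evaluation clauses. For abstraction, $\overline{\nu}(\LambdaApp \textbf{x}_\alpha \mdot \textbf{B}_\beta)$ is not a bare abstraction but one whose body is guarded by $(\If \; (\overline{\mu}(\alpha)\,\overline{\nu}(\textbf{x}_\alpha)) \; \overline{\nu}(\textbf{B}_\beta) \; \Undefined_{\tau(\overline{\mu}(\beta))})$; I must show this guard makes the $\sM_2$-function undefined exactly off $\underline{D}_{\tau(\overline{\mu}(\alpha))}^2$, so that under $\rho_{\alpha \tarrow \beta}$ it corresponds to the intended partial function on $D_\alpha^1$. This is precisely where the definitions of $\rightharpoonup$ and of $\underline{D}^2$ as the extension of $\overline{\mu}(\alpha)$ are consumed, and it is what guarantees the value lands in the range of $\rho_{\alpha \tarrow \beta}$ that defines $D_{\alpha \tarrow \beta}^1$. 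For evaluation, $\overline{\nu}(\sembrack{\textbf{A}_\epsilon}_{\textbf{B}_\beta}) = \sembrack{\overline{\nu}(\textbf{A}_\epsilon)}_{\overline{\nu}(\textbf{B}_\beta)}$, and I must reconcile the two evaluation clauses: the induction hypothesis on $\textbf{A}_\epsilon$ makes the inner constructions correspond via $\rho_\epsilon = \overline{\nu}^{-1}$, while obligation~6 identifies $\nu(\mname{is-expr}_{\epsilon \tarrow o}^{\beta})$ with $\mname{is-expr}_{\epsilon \tarrow o}^{\tau(\overline{\mu}(\beta))}$, so that the is-expr test and the ensuing $\sE^{-1}$-evaluation commute across the two models. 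The delicate bookkeeping tying together $\rho_\epsilon$, $\sE^{-1}$, and $\overline{\nu}$ on constructions is where I expect the real work to concentrate.
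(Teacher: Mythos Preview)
Your approach is essentially the same as the paper's: define $V^{{\cal M}_1}$ by transporting $V^{{\cal M}_2}\circ\overline{\nu}$ through $\rho_\alpha$ (your $\widehat{\phi}$ is exactly the paper's $\overline{\nu}(\phi)$), then verify the seven valuation clauses by structural induction and invoke the obligation groups for the frame, interpretation, and axiom conditions. You flesh out the abstraction and evaluation cases that the paper leaves as a one-line ``can be shown by induction,'' and you should also note that obligation~2 (not just 3--6) is what guarantees $V^{{\cal M}_2}(\nu(\textbf{c}_\alpha))\in\underline{D}^2_{\tau(\overline{\mu}(\alpha))}$ so that $I_1(\textbf{c}_\alpha)$ is actually defined; otherwise your plan matches the paper's proof.
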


\begin{proof}
By the first group of obligations of $\Phi$, $\sD_{\alpha}^{1}$ is
nonempty for all $\alpha \in \sB_1$, and so $\set{\sD_{\alpha}^{1}
  \;|\; \alpha \in \sT}$ is a frame of {\churchuqe}.  By the second to
sixth groups of obligations of $\Phi$, $\sM_1$ is an interpretation of
{\churchuqe}.  For all $\textbf{A}_\alpha \in L_1$ and $\phi \in
\mname{assign}(\sM_1)$, define $V_{\phi}^{{\cal
    M}_1}(\textbf{A}_\alpha)$ as follows:

\be

  \item[] $(\star)$ $V_{\phi}^{{\cal M}_1}(\textbf{A}_\alpha) =
    \rho_\alpha(V_{\overline{\nu}(\phi)}^{{\cal
        M}_2}(\overline{\nu}(\textbf{A}_\alpha)))$ if
    $V_{\overline{\nu}(\phi)}^{{\cal
        M}_2}(\overline{\nu}(\textbf{A}_\alpha))$ is defined and
    $V_{\phi}^{{\cal M}_1}(\textbf{A}_\alpha)$ is undefined otherwise,

\ee 
where $\overline{\nu}(\phi)$ is any $\psi \in \mname{assign}(\sM_2)$
such that, for all $\textbf{x}_\beta \in \sV_1$,
$\rho_\beta(\psi(\overline{\nu}(\textbf{x}_\beta))) =
\phi(\textbf{x}_\beta)$.  This definition of $V_{\phi}^{{\cal M}_1}$
can be easily extended to a valuation function on all expressions that
can be shown, by induction on the structure of expressions, to satisfy
the seven clauses of the definition of a general model.  Therefore,
$\sM_1$ is a general model for {\churchuqe}.  Then $(\star)$ implies
\[(\star\star) {\sglsp} \sM_1 \vDash \textbf{A}_o {\sglsp}
\mbox{iff} {\sglsp} \sM_2 \vDash \overline{\nu}(\textbf{A}_o)\] for
all semantically closed formulas $\textbf{A}_o \in L_1$.  By the
seventh group of obligations of $\Phi$, $\sM_2 \vDash
\overline{\nu}(\textbf{A}_o)$ for all $\textbf{A}_o \in \Gamma_1$, and
thus $\sM_1$ is a general model for $T_1$ by~$(\star\star)$. 

\hfill $\Box$
\end{proof}

\begin{thm}[Semantic Morphism Theorem]\label{thm:sem-morph}
Let $T_1$ and $T_2$ be normal theories and $\Phi$ be a translation
from $T_1$ to $T_2$.  Suppose each obligation of $\Phi$ is valid in
$T_2$.  Then $\Phi$ is a semantic morphism from $T_1$ to $T_2$.
\end{thm}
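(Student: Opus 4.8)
The plan is to reduce the theorem directly to Lemma~\ref{lem:is-model} together with the equivalence $(\star\star)$ established inside its proof; almost no new work is required, since the model construction has already done the heavy lifting. First I would fix an arbitrary semantically closed formula $\textbf{A}_o$ of $T_1$ and assume $T_1 \vDash \textbf{A}_o$. The goal is then to show $T_2 \vDash \overline{\nu}(\textbf{A}_o)$, i.e., that $\overline{\nu}(\textbf{A}_o)$ is valid in every general model for $T_2$.

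So I would take an arbitrary general model $\sM_2$ for $T_2$ and produce the required validity in it. Since by hypothesis each obligation of $\Phi$ is valid in $T_2$, and $\sM_2$ is a general model for $T_2$, each obligation of $\Phi$ is in particular valid in $\sM_2$. This is exactly the hypothesis of Lemma~\ref{lem:is-model}, so the interpretation $\sM_1$ built from $\sM_2$ --- through the domains $\overline{D}_{\alpha}^{1}$ and $D_{\alpha}^{1}$, the maps $\rho_\alpha$, and the interpretation function $I_1$ --- is a general model for $T_1$.

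Next I would transport the assumed validity across the translation. Because $T_1 \vDash \textbf{A}_o$ and $\sM_1$ is a general model for $T_1$, we get $\sM_1 \vDash \textbf{A}_o$. As $\textbf{A}_o$ is semantically closed, the equivalence $(\star\star)$ from the proof of Lemma~\ref{lem:is-model}, namely that $\sM_1 \vDash \textbf{A}_o$ iff $\sM_2 \vDash \overline{\nu}(\textbf{A}_o)$, applies and yields $\sM_2 \vDash \overline{\nu}(\textbf{A}_o)$. Since $\sM_2$ was an arbitrary general model for $T_2$, this gives $T_2 \vDash \overline{\nu}(\textbf{A}_o)$, and hence, by definition, $\Phi$ is a semantic morphism from $T_1$ to $T_2$.

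I do not expect a genuine obstacle at the level of this theorem: all the difficulty is absorbed into Lemma~\ref{lem:is-model}, whose proof constructs $\sM_1$ and verifies the valuation clause $(\star)$ by induction on the structure of expressions. The only points needing mild care are bookkeeping ones --- first, that ``valid in $T_2$'' (meaning valid in \emph{every} general model of $T_2$) is precisely what delivers validity of the obligations in the \emph{particular} $\sM_2$ under consideration, and second, that the semantic closedness of $\textbf{A}_o$ is exactly the hypothesis that licenses invoking $(\star\star)$, which is only asserted for semantically closed formulas. Both are immediate from the definitions, so the theorem follows simply by stitching Lemmas~\ref{lem:translation} and~\ref{lem:is-model} together.
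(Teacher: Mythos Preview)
Your proposal is correct and follows essentially the same route as the paper's own proof: fix a semantically closed $\textbf{A}_o$ valid in $T_1$, take an arbitrary general model $\sM_2$ of $T_2$, observe that the obligations are valid in $\sM_2$, apply Lemma~\ref{lem:is-model} to obtain that $\sM_1$ is a general model for $T_1$, and then use $(\star\star)$ to transfer $\sM_1 \vDash \textbf{A}_o$ to $\sM_2 \vDash \overline{\nu}(\textbf{A}_o)$. The only cosmetic difference is that the paper explicitly notes the vacuous case where $T_2$ has no general models, which you handle implicitly by quantifying over arbitrary $\sM_2$.
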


\begin{proof}
Let $\Phi= (\mu,\nu)$ be a translation from $T_1$ to $T_2$ and suppose
each obligation of $\Phi$ is valid in $T_2$.  Let $\textbf{A}_o \in
L_1$ be semantically closed and valid in $T_1$.  We must show that
$\overline{\nu}(\textbf{A}_o)$ is valid in every general model for
$T_2$.  Let $\sM_2$ be a general model for $T_1$.  (We are done if
there are no general models for $T_2$.)  Let $\sM_1$ be extracted from
$\sM_2$ as above. Obviously, each obligation of $\Phi$ is valid in
$\sM_2$, and so $\sM_1$ is a general model for $T_1$ by
Lemma~\ref{lem:is-model}.  Therefore, $\sM_1 \vDash \textbf{A}_o$ ,
and so $\sM_2 \vDash \overline{\nu}(\textbf{A}_o)$ by $(\star\star)$ in
the proof of Lemma~\ref{lem:is-model}. \hfill $\Box$
\end{proof}

\begin{thm}[Relative Satisfiability]
Let $T_1$ and $T_2$ be normal theories and suppose $\Phi$ is a
semantic morphism of from $T_1$ to $T_2$.  Then there is a general
model for $T_1$ if there is a general model for $T_2$.
\end{thm}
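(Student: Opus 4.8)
The plan is to produce a general model for $T_1$ directly from a given general model for $T_2$, reusing the extraction construction already developed for the Semantic Morphism Theorem (Theorem~\ref{thm:sem-morph}). Suppose $\sM_2$ is a general model for $T_2$; I must exhibit a general model for $T_1$. The key observation is that Lemma~\ref{lem:is-model} does precisely this: from any general model $\sM_2$ of $T_2$ in which every obligation of $\Phi$ is valid, it extracts an interpretation $\sM_1$ and shows that $\sM_1$ is a general model for $T_1$. So the whole argument reduces to checking that the obligations of $\Phi$ are valid in $\sM_2$.

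For this I would first dispatch the seventh group of obligations, the translated axioms $\overline{\nu}(\textbf{A}_o)$ for $\textbf{A}_o \in \Gamma_1$: each such $\textbf{A}_o$ is semantically closed because $T_1$ is normal, and trivially $T_1 \vDash \textbf{A}_o$, so the defining property of a semantic morphism gives $T_2 \vDash \overline{\nu}(\textbf{A}_o)$ and hence $\sM_2 \vDash \overline{\nu}(\textbf{A}_o)$. The remaining groups (nonemptiness of each $\mu(\alpha)$, membership of the interpreted constants in their guard predicates, and the equations fixing $\nu(=)$, $\nu(\iota)$, and the syntactic constants) are the structural conditions under which the extraction is meaningful, and I would take these to be valid in $T_2$ as part of $\Phi$ being a well-formed semantic morphism. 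Once all obligations are valid in $\sM_2$, Lemma~\ref{lem:is-model} yields $\sM_1 \vDash \Gamma_1$, i.e.\ a general model for $T_1$, completing the proof.

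An alternative and more self-contained route is the contrapositive, in the style of a relative-consistency argument: if $T_1$ has no general model, then every semantically closed formula of $T_1$ is vacuously valid in $T_1$, in particular the canonical falsehood $F_o$, which is closed and eval-free and hence semantically closed; the semantic morphism then forces $T_2 \vDash \overline{\nu}(F_o)$, and one would conclude that $T_2$ has no general model by showing $\overline{\nu}(F_o)$ is unsatisfiable. I expect the genuine obstacle to sit precisely here, in either route: one must know that $\overline{\nu}$ transports the logical meaning of equality, and therefore of $T_o$, $F_o$, and negation, faithfully into $T_2$, which is exactly the content of the obligation fixing $\nu(=_{o \tarrow o \tarrow o})$. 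Establishing that this obligation, together with the other structural obligations, holds in $\sM_2$ rather than merely following informally from the word ``morphism'' is the step I would be most careful about, and it is why the direct route via Lemma~\ref{lem:is-model} is attractive: it isolates this dependence cleanly.
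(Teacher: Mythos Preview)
Your approach is essentially the paper's: take a general model $\sM_2$ of $T_2$, argue that the obligations of $\Phi$ are valid there, and invoke Lemma~\ref{lem:is-model} to extract a general model $\sM_1$ of $T_1$. The paper compresses the step you flag as delicate into the single assertion ``Since $\Phi$ is a semantic morphism, each of its obligations is valid in $T_2$,'' so your caution is well placed but the route is the same.
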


\begin{proof}
Let $\Phi= (\mu,\nu)$ be a semantic morphism from $T_1$ to $T_2$,
$\sM_2$ be a general model for $T_1$, and $\sM_1$ be extracted from
$\sM_2$ as above.  Since $\Phi$ is a semantic morphism, each of its
obligations is valid in $T_2$.  Hence, $\sM_1$ is a general model for
$T_1$ by Lemma~\ref{lem:is-model}. \hfill $\Box$
\end{proof}

\section{Examples}\label{sec:examples}

We will illustrate the theory morphism machinery of {\churchuqe} with
two simple examples involving monoids, the first in which two concepts
are interpreted as the same concept and second in which a type is
interpreted as a subset of its denotation.  Let $\sC_{\rm log}
\subseteq \sC$ be the set of logical constants of {\churchuqe}.

\subsection{Example 1: Monoid with Left and Right Identity Elements}

Define $M = (L_M,\Gamma_M)$ to be the usual theory of an abstract
monoid where:

\be

  \item $\sB_M = \set{o,\epsilon,\iota}$.

  \item $\sC_M = \sC_{\rm log} \cup \set{e_\iota, *_{\iota \tarrow
      \iota \tarrow \iota}}$. {\sglsp} ($*_{\iota \tarrow \iota \tarrow
    \iota}$ is written as an infix operator.)

  \item $L_M$ is the set of $(\sB_M,\sC_M)$ expressions.

  \item $\sV_M$ is the set of variables in $L_M$.

  \item $\Gamma_M$ contains the following axioms:

  \be

    \item $\ForallApp x_\iota \mdot \ForallApp y_\iota \mdot
      \ForallApp z_\iota \mdot x_\iota *_{\iota \tarrow \iota \tarrow
        \iota} (y_\iota *_{\iota \tarrow \iota \tarrow \iota} z_\iota) =
      (x_\iota *_{\iota \tarrow \iota \tarrow \iota} y_\iota) *_{\iota
        \tarrow \iota \tarrow \iota} z_\iota.$

    \item $\ForallApp x_\iota \mdot e_\iota *_{\iota \tarrow \iota
      \tarrow \iota} x_\iota = x_\iota$.

    \item $\ForallApp x_\iota \mdot x_\iota *_{\iota \tarrow \iota
      \tarrow \iota} e_\iota = x_\iota$.

  \ee

\ee

\noindent
Define $M' = (L_{M'},\Gamma_{M'})$ to be the alternate theory of an
abstract monoid with left and right identity elements where:

\be

  \item $\sB_{M'} =\sB_M$.

  \item $\sC_{M'} = \sC_{\rm log} \cup \set{e_{\iota}^{\rm left},
    e_{\iota}^{\rm right}, *_{\iota \tarrow \iota \tarrow
      \iota}}$. {\sglsp} ($*_{\iota \tarrow \iota \tarrow \iota}$ is
    written as an infix operator.)

  \item $L_{M'}$ is the set of $(\sB_{M'},\sC_{M'})$ expressions.

  \item $\sV_{M'} = \sV_M$.

  \item $\Gamma_{M'}$ contains the following axioms:

  \be

    \item $\ForallApp x_\iota \mdot \ForallApp y_\iota \mdot
      \ForallApp z_\iota \mdot x_\iota *_{\iota \tarrow \iota \tarrow
        \iota} (y_\iota *_{\iota \tarrow \iota \tarrow \iota} z_\iota) =
      (x_\iota *_{\iota \tarrow \iota \tarrow \iota} y_\iota) *_{\iota
        \tarrow \iota \tarrow \iota} z_\iota.$

    \item $\ForallApp x_\iota \mdot e_{\iota}^{\rm left} *_{\iota \tarrow \iota
      \tarrow \iota} x_\iota = x_\iota$.

    \item $\ForallApp x_\iota \mdot x_\iota *_{\iota \tarrow \iota
      \tarrow \iota} e_{\iota}^{\rm right} = x_\iota$.

  \ee

\ee

We would like to construct a semantic morphism from $M'$ to $M$ that
maps the left and right identity elements of $M'$ to the single
identity element of $M$.  This is not possible since the mapping $\nu$
must be injective to overcome the Constant Interpretation Problem.  We
need to add a dummy constant to $M$ to facilitate the definition of
the semantic morphism.  Let $\overline{M}$ be the definitional
extension of $M$ that contains the new constant $e'_\iota$ and the new
axiom $e'_\iota = e_\iota$.\footnote{Technically, $e'_\iota$ is a
  constant chosen from $\sC \setminus \sC_{M}$.  There is no harm is
  assuming that such a constant already exists in $\sC$.}

Let $\Phi = (\mu,\nu)$ to be the translation from $M'$ to
$\overline{M}$ such that:

\be

  \item $\mu(\iota) = \LambdaApp x_\iota
    \mdot T_o$.

  \item $\nu$ is the identity function on $\sV_{M'} \cup \sC_{\log}
    \cup \set{*_{\iota \tarrow \iota \tarrow \iota}}$.

  \item $\nu(e_{\iota}^{\rm left}) = e_\iota$.

  \item $\nu(e_{\iota}^{\rm right}) = e'_\iota$.

\ee

\noindent
It is easy to see that $\Phi$ is a semantic morphism by
Theorem~\ref{thm:sem-morph} .

\subsection{Example 2: Monoid interpreted as the Trivial Monoid}

The identity element of a monoid forms a submonoid of the monoid that
is isomorphic with the trivial monoid consisting of a single element.
There is a natural morphism from a theory of a monoid to itself in
which the type of monoid elements is interpreted by the singleton set
containing the identity element.  This kind of morphism cannot be
directly expressed using a definition of a theory morphism that maps
base types to types.  However, it can be directly expressed using the
notion of a semantic morphism we have defined.

The desired translation interprets the type $\iota$ as the set
$\set{e_\iota}$ and the constants denoting functions involving $\iota$
as functions in which the domain of $\iota$ is replaced by
$\set{e_\iota}$.  This is not possible since the mapping $\nu$ must
map constants to constants to overcome the Constant Interpretation
Problem.  We need to add a set of dummy constants to $M$ to facilitate
the definition of the semantic morphism.

Define $\mu$ as follows:

\be

  \item For $\alpha \in \set{o,\epsilon}$, $\mu(\alpha) = \LambdaApp
    x_\alpha \mdot T_o$.

  \item $\mu(\iota) = \LambdaApp x_\iota \mdot x_\iota = e_\iota$.

\ee

\noindent
Let $\overline{M}$ be the definitional extension of $M$ that contains
the following the new defined constants:

\be

  \item ${='_{\alpha \tarrow \alpha \tarrow o}} = 
    \LambdaApp x_{\alpha} \mdot \LambdaApp y_{\alpha} \mdot
    (\If \; (\overline{\mu}(\alpha) \, x_{\alpha} \And
    \overline{\mu}(\alpha) \, y_{\alpha}) \; (x_{\alpha} =_{\alpha
      \tarrow \alpha \tarrow o} y_{\alpha}) \; \Undefined_o)$\\ where
    $\alpha \in \sT$ contains $\iota$.

  \item ${\iota'_{(\alpha \tarrow o) \tarrow \alpha}} = 
    \LambdaApp x_{\alpha \tarrow o} \mdot 
    (\If \; (\overline{\mu}(\alpha \tarrow o) \, x_{\alpha \tarrow o}) \;
    (\iota_{(\alpha \tarrow o) \tarrow \alpha} \, x_{\alpha \tarrow o}) \;
    \Undefined_{\alpha})$\\
    where $\alpha \in \sT$ contains $\iota$.

  \item $*'_{\iota \tarrow \iota \tarrow \iota} = \LambdaApp x_\iota
    \mdot \LambdaApp y_\iota \mdot (\If \; (\overline{\mu}(\iota) \,
    x_\iota \And \overline{\mu}(\iota) \, y_\iota) \; (x_\iota
    *_{\iota \tarrow \iota \tarrow \iota} y_\iota) \;
    \Undefined_\iota)$.\footnote{The definition of $*'_{\iota \tarrow
        \iota \tarrow \iota}$ can be simplified by using the
      definition of $\mu$ and noting that $e_\iota *_{\iota \tarrow
        \iota \tarrow \iota} e_\iota$ equals $e_\iota$.}

\ee

Let $\Psi = (\mu,\nu)$ to be the translation from $M$ to
$\overline{M}$ such that:

\be

  \item $\mu$ is defined as above.

  \item $\nu$ is the identity function on $\sV_{M'}$.

  \item $\nu$ is the identity function on the members of $\sC_{\rm
    log}$ except for the constants $=_{\alpha \tarrow \alpha \tarrow
    o}$ and $\iota_{(\alpha \tarrow o) \tarrow \alpha}$ where $\alpha
    \in \sT$ contains $\iota$.

  \item ${\nu(=_{\alpha \tarrow \alpha \tarrow o})} = {='_{\alpha
      \tarrow \alpha \tarrow o}}$ for all $\alpha \in \sT$ containing
    $\iota$.

  \item ${\nu(\iota_{(\alpha \tarrow o) \tarrow \alpha})} =
    {\iota'_{(\alpha \tarrow o) \tarrow \alpha}}$ for all $\alpha \in
    \sT$ containing $\iota$.

  \item $\nu(e_\iota) = e_\iota$.

  \item $\nu(*_{\iota \tarrow \iota \tarrow \iota}) = *'_{\iota \tarrow
    \iota \tarrow \iota}$.

\ee

\noindent
It is easy to see that $\Phi$ is a semantic morphism by
Theorem~\ref{thm:sem-morph} .

\section{Conclusion}\label{sec:conclusion}

{\churchqe} is a version of Church's type theory with quotation and
evaluation described in great detail in~\cite{FarmerArxiv16}.  In this
paper we have (1) presented {\churchuqe}, a variant of {\churchqe}
that admits undefined expressions, partial functions, and multiple
base types of individuals, (2) defined a notion of a theory morphism
in {\churchuqe}, and (3) given two simple examples that illustrate the
use of theory morphisms in {\churchuqe}.  The theory morphisms of
{\churchuqe} overcome the Constant Interpretation Problem discussed in
section~\ref{sec:introduction} by requiring constants to be
injectively mapped to constants.  Since {\churchuqe} admits partial
functions, {\churchuqe} theory morphisms are able to map base types to
sets of values of the same type --- which enables many additional
natural meaning-preserving mappings between theories to be directly
defined as {\churchuqe} theory morphisms.  Thus the paper demonstrates
how theory morphisms can be defined in a traditional logic with
quotation and evaluation and how support for partial functions in a
traditional logic can be leveraged to obtain a wider class of theory
morphisms.

The two examples presented in section~\ref{sec:examples} show that
constructing a translation in {\churchuqe} from a theory $T_1$ to a
theory $T_2$ will often require defining new dummy constants in $T_2$.
This is certainly a significant inconvenience.  However, it is an
inconvenience that can be greatly ameliorated in an implementation of
{\churchuqe} by allowing a user to define a ``pre-translation'' that
is automatically transformed into a bona fide translation.  A
pre-translation from $T_1$ and $T_2$ would be a pair $(\mu,\nu)$ where
$\mu$ maps base types to either types or semantically closed
predicates, $\nu$ maps constants to expressions that need not be
constants, and $\nu$ is not required to be injective.  From the
pre-translation, the system would automatically extend $T_2$ to a
theory $T'_2$ and then construct a translation from $T_1$ to $T'_2$.

Our long-range goal is to implement a system for developing biform
theory graphs utilizing logics equipped with quotation and evaluation.
The next step in this direction is to implement {\churchqe} by
extending HOL Light~\cite{Harrison09}, a simple implementation of
HOL~\cite{GordonMelham93}.

\bibliography{$HOME/research/lib/imps}%$ 

\begin{thebibliography}{10}

\bibitem{Andrews02}
P.~B. Andrews.
\newblock {\em An Introduction to Mathematical Logic and Type Theory: To Truth
  through Proof, Second Edition}.
\newblock Kluwer, 2002.

\bibitem{BarwiseSeligman97}
J.~Barwise and J.~Seligman.
\newblock {\em Information Flow: The Logic of Distributed Systems}, volume~44
  of {\em Tracts in Computer Science}.
\newblock Cambridge University Press, 1997.

\bibitem{CaretteFarmer08}
J.~Carette and W.~M. Farmer.
\newblock High-level theories.
\newblock In A.~Autexier, J.~Campbell, J.~Rubio, M.~Suzuki, and F.~Wiedijk,
  editors, {\em Intelligent Computer Mathematics}, volume 5144 of {\em Lecture
  Notes in Computer Science}, pages 232--245. Springer, 2008.

\bibitem{Farmer93}
W.~M. Farmer.
\newblock Theory interpretation in simple type theory.
\newblock In J.~Heering et~al., editor, {\em Higher-Order Algebra, Logic, and
  Term Rewriting}, volume 816 of {\em Lecture Notes in Computer Science}, pages
  96--123. Springer, 1994.

\bibitem{Farmer04}
W.~M. Farmer.
\newblock Formalizing undefinedness arising in calculus.
\newblock In D.~Basin and M.~Rusinowitch, editors, {\em Automated
  Reasoning---IJCAR 2004}, volume 3097 of {\em Lecture Notes in Computer
  Science}, pages 475--489. Springer, 2004.

\bibitem{Farmer07b}
W.~M. Farmer.
\newblock Biform theories in {Chiron}.
\newblock In M.~Kauers, M.~Kerber, R.~R. Miner, and W.~Windsteiger, editors,
  {\em Towards Mechanized Mathematical Assistants}, volume 4573 of {\em Lecture
  Notes in Computer Science}, pages 66--79. Springer, 2007.

\bibitem{Farmer08a}
W.~M. Farmer.
\newblock Andrews' type system with undefinedness.
\newblock In C.~Benzm\"uller, C.~Brown, J.~Siekmann, and R.~Statman, editors,
  {\em Reasoning in Simple Type Theory: Festschrift in Honor of Peter B.
  Andrews on his 70th Birthday}, Studies in Logic, pages 223--242. College
  Publications, 2008.

\bibitem{Farmer13}
W.~M. Farmer.
\newblock The formalization of syntax-based mathematical algorithms using
  quotation and evaluation.
\newblock In J.~Carette, D.~Aspinall, C.~Lange, P.~Sojka, and W.~Windsteiger,
  editors, {\em Intelligent Computer Mathematics}, volume 7961 of {\em Lecture
  Notes in Computer Science}, pages 35--50. Springer, 2013.

\bibitem{FarmerArxiv16}
W.~M. Farmer.
\newblock Incorporating quotation and evaluation into {Church's} type theory.
\newblock {\em Computing Research Repository (CoRR)}, abs/1612.02785 (72 pp.),
  2016.

\bibitem{Farmer16}
W.~M. Farmer.
\newblock Incorporating quotation and evaluation into {Church's} type theory:
  {Syntax} and semantics.
\newblock In M.~Kohlhase, M.~Johansson, B.~Miller, L.~de~Moura, and F.~Tompa,
  editors, {\em Intelligent Computer Mathematics}, volume 9791 of {\em Lecture
  Notes in Computer Science}, pages 83--98. Springer, 2016.

\bibitem{FarmerEtAl92b}
W.~M. Farmer, J.~D. Guttman, and F.~J. Thayer.
\newblock Little theories.
\newblock In D.~Kapur, editor, {\em Automated Deduction---CADE-11}, volume 607
  of {\em Lecture Notes in Computer Science}, pages 567--581. Springer, 1992.

\bibitem{GordonMelham93}
M.~J.~C. Gordon and T.~F. Melham.
\newblock {\em Introduction to {HOL}: A Theorem Proving Environment for Higher
  Order Logic}.
\newblock Cambridge University Press, 1993.

\bibitem{Harrison09}
J.~Harrison.
\newblock {HOL Light}: {A}n overview.
\newblock In S.~Berghofer, T.~Nipkow, C.~Urban, and M.~Wenzel, editors, {\em
  Theorem Proving in Higher Order Logics}, volume 5674 of {\em Lecture Notes in
  Computer Science}, pages 60--66. Springer, 2009.

\bibitem{Kohlhase14}
M.~Kohlhase.
\newblock Mathematical knowledge management: Transcending the one-brain-barrier
  with theory graphs.
\newblock {\em European Mathematical Society (EMS) Newsletter}, 92:22–--27,
  June 2014.

\end{thebibliography}
\bibliographystyle{plain}

\end{document}